\documentclass[11pt]{article}
\usepackage{amsfonts,amsthm,amssymb}
\usepackage[fleqn]{amsmath}
\usepackage[pdftex]{graphicx}
\usepackage{color}
\textheight23.5cm
\textwidth16.5cm
\evensidemargin-0.cm
\oddsidemargin-0.cm
\topmargin-1.5cm

%%%%%%%%%%%% abbreviations %%%%%%%%%%%%%%%%%%%%%%%%%%%%%%%
\newcommand{\be}{\begin{eqnarray}}
\newcommand{\ee}{\end{eqnarray}}
\newcommand{\bez}{\begin{eqnarray*}}
\newcommand{\eez}{\end{eqnarray*}}
%%%%%%%%%%%%%%%%%%%%%%%%%%%%%%%%%%%%%%%%%%%%%%%%%%%%%%%%%%

\theoremstyle{plain}
\newtheorem{theorem}{Theorem}[section]
\newtheorem{lemma}[theorem]{Lemma}
\newtheorem{proposition}[theorem]{Proposition}

\theoremstyle{definition}

\newtheorem{remark}[theorem]{Remark}
\newtheorem{example}[theorem]{Example}

\numberwithin{equation}{section}
\numberwithin{theorem}{section}

\allowdisplaybreaks

\begin{document}

\title{\bf Matrix KP: tropical limit and Yang-Baxter maps}

\author{
\sc{Aristophanes Dimakis}$^a$ and \sc{Folkert M\"uller-Hoissen}$^b$ \\
  \small
 $^a$ Dept. of Financial and Management Engineering,
 University of the Aegean, Chios, Greece \\
 \small E-mail: dimakis@aegean.gr  \\
  \small
 $^b$ Max-Planck-Institute for Dynamics and Self-Organization,
         G\"ottingen, Germany \\
 \small E-mail: folkert.mueller-hoissen@ds.mpg.de        
}

\date{}

\maketitle

\begin{abstract}
We study soliton solutions of matrix Kadomtsev-Petviashvili (KP) equations in a tropical limit, in which their support 
at fixed time is a planar graph and polarizations are attached to its constituting lines.  
There is a subclass of ``pure line soliton solutions'' for which we find that, in this limit, 
the distribution of polarizations is fully determined by a Yang-Baxter map. For a \emph{vector} 
KP equation, this map is given by an $R$-matrix, whereas it is a non-linear map in case of a 
more general matrix KP equation. We also consider the corresponding Korteweg-deVries (KdV) reduction.
Furthermore, exploiting the fine structure of soliton interactions in the 
tropical limit, we obtain a new solution of the tetrahedron (or Zamolodchikov) equation. 
Moreover, a solution of the functional tetrahedron equation arises from the parameter-dependence 
of the vector KP $R$-matrix. 
\end{abstract}

\section{Introduction}
\label{sec:intro}
A line soliton solution of the scalar Kadomtsev-Petviashvili (KP-II) equation (see, e.g., \cite{Koda17}) is, 
at fixed time $t$, an exponentially localized wave on a plane. 
The ``tropical limit'', in the sense of our work in 
\cite{DMH11KPT,DMH12KPBT,DMH14KdV} (also see \cite{Bion+Chak06,Chak+Koda08JPA}), 
takes it to a piecewise linear structure, a planar graph that represents the wave crest, with values 
of the dependent variable attached to its edges. 

In this work we consider the $m \times n$ \emph{matrix} potential KP equation 
\be
     4 \phi_{xt} - \phi_{xxxx} - 3 \phi_{yy} - 6 (\phi_x K \phi_x)_x + 6 (\phi_x K \phi_y - \phi_y K \phi_x) = 0 \, , 
                    \label{KmatrixpKP}
\ee
where $K$ is a constant $n \times m$ matrix and $\phi$ an $m \times n$ matrix, depending on independent variables $x,y,t$, 
and a subscript indicates a corresponding partial derivative. We will refer to this equation as pKP$_K$. 

If $\phi$ is a solution of (\ref{KmatrixpKP}), then $\phi_R := \phi K$ and $\phi_L := K \phi$ solve the ordinary 
$m \times m$, respectively $n \times n$, matrix potential KP equation. 
We also note that, if $K = T K' S$ with a constant $m' \times m$ matrix $S$ and a constant  
$n \times n'$ matrix $T$, then the $m' \times n'$ matrix $\phi' = S \phi T$ satisfies the pKP$_{K'}$ equation, 
as a consequence of (\ref{KmatrixpKP}).

In the vector case $n=1$, writing $K = (k_1,\ldots,k_m)$ and $\phi = (\phi_1, \ldots,\phi_m)^\intercal$, 
(\ref{KmatrixpKP}) becomes the following system of coupled equations,
\bez
    4 \, \phi_{i,xt} - \phi_{i,xxxx} - 3 \, \phi_{i,yy} - 6 \sum_{j=1}^m k_j \, \Big( 
        (\phi_{i,x} \phi_{j,x})_x - \phi_{i,x} \phi_{j,y} + \phi_{i,y} \phi_{j,x} \Big) = 0 \qquad
        i=1,\ldots,m \, .
\eez
By choosing $T=1$ and any invertible $m \times m$ matrix $S$ that has $K$ as 
its first row, we have $K = K' S$ with $K' = (1,0,\ldots,0)$. In terms of the new variable $\phi' = S \phi$, 
the above system thus consists of one scalar pKP equation and $m-1$ linear equations involving the 
dependent variable of the former. 

 For 
\bez
     u := 2 \, \phi_x \,  ,
\eez
we obtain from (\ref{KmatrixpKP}) the $m \times n$ matrix KP equation
\be
     ( \, 4 \, u_t - u_{xxx} - 3 \, (u K u)_x \, )_x - 3 \, u_{yy} + 3 \, \Big( 
     u K \int u_y \, dx - \int u_y \, dx \, K u \Big)_x = 0 \, .
                    \label{KmatrixKP}
\ee
The extension of the scalar KP equation to a matrix version achieves that solitons carry internal degrees of 
freedom, ``polarization''. 

The Korteweg-deVries (KdV) reduction of (\ref{KmatrixKP}) is
\be
   4 \, u_t - u_{xxx} - 3 \, (u K u)_x = 0 \, ,
\ee
which we will refer to as KdV$_K$. If $K$ is the identity matrix, this is the matrix KdV equation (see, e.g., \cite{Gonc01}). 
The 2-soliton solution of the latter yields a map from polarizations at $t \ll 0$ to polarizations at $t \gg 0$. 
It is known \cite{Vese03,Gonc+Vese04} that this yields a Yang-Baxter map, i.e., a set-theoretical solution 
of the (quantum) Yang-Baxter equation (also see \cite{Tsuc04} for the case of the vector Nonlinear Schr\"odinger equation). 
Not surprisingly, this is a feature preserved in the tropical limit. The surprising new insight, however, is that 
this map governs the evolution of polarizations throughout the tropical limit graph of a soliton solution. 
In case of a \emph{vector} KdV equation, i.e., KdV$_K$ with $n=1$, it is given by an $R$-matrix, a \emph{linear} 
map solution of the Yang-Baxter equation.

More generally, we will explore in this work the tropical limit of ``pure'' (see Section~\ref{sec:pure}) 
soliton solutions of the above $K$-modified matrix KP equation and demonstrate that a Yang-Baxter map 
governs their structure. 

In case of the vector KP equation, the expression for a pure soliton solution 
involves a function $\tau$ which is a $\tau$-function of the scalar KP equation. Its tropical limit 
at fixed $t$ determines a planar graph, and the vector KP soliton solution associates in this limit 
a constant vector (polarization) with each linear segment of the graph. The polarization values are 
then related by a linear Yang-Baxter map, represented by an $R$-matrix, which does not depend on the 
independent variables $x,y,t$, but only on the ``spectral parameters'' of the soliton solution.  

Section~\ref{sec:sol} summarizes a binary Darboux transformation for the pKP$_K$ equation and applies 
it to a trivial seed solution in order to obtain soliton solutions. In Section~\ref{sec:pure} we restrict 
out consideration to the subclass of ``pure'' soliton solutions. This essentially disregards solutions with 
substructures of the form of Miles resonances. Section~\ref{sec:trop} addresses the tropical limit of pure soliton 
solutions. The cases of two and three solitons are then treated in Sections~\ref{sec:2solitons} and 
\ref{sec:3solitons}. Section~\ref{sec:vector_case_and_R-matrix} provides a general proof of the fact that, 
in the vector case, an $R$-matrix relates the polarizations at crossings. The linearity of the Yang-Baxter map 
in the vector case is certainly related to the particularly simple structure of the vector pKP equation  
mentioned above. 
In Section~\ref{sec:recon} we show how to construct a pure $N$-soliton solution of 
the vector KP equation from a pure $N$-soliton solution of the scalar KP equation, $N$ vector data and 
the aforementioned $R$-matrix. Section~\ref{sec:sol_tetrahedron} extends our exploration of the vector KP 3-soliton case 
and presents an apparently new solution of the tetrahedron (Zamolodchikov) equation (see, e.g., \cite{DMH15} 
and references cited there). In Section~\ref{sec:genR&tetra} we reveal the structure of the vector KP $R$-matrix,   
which leads us to a more general two-parameter $R$-matrix. Its parameter-dependence determines, via a ``local'' 
Yang-Baxter equation \cite{Mail+Nijh89PLB} (also see \cite{DMH15}), a solution of the functional tetrahedron equation 
(see, e.g., \cite{Serg98,KKS98,DMH15}), i.e., the set-theoretical version of the tetrahedron equation. 
Finally, Section~\ref{sec:conclusions} contains some concluding remarks.

\section{Soliton solutions of the $K$-modified matrix KP equation}  
\label{sec:sol}
The following describes a binary Darboux transformation for the pKP$_K$ equation (\ref{KmatrixpKP}). 
This is a simple extension of what is presented in \cite{CDMH16}, for example. 
Let $\phi_0$ be a solution of (\ref{KmatrixpKP}). 
Let $\theta$ and $\chi$ be $m \times N$, respectively $N \times n$, matrix solutions of the linear equations  
\bez
  && \theta_y = \theta_{xx} + 2 \phi_{0,x} K \theta \, , \qquad
     \theta_t = \theta_{xxx} + 3 \phi_{0,x} K \theta_x + \frac{3}{2} (\phi_{0,y} + \phi_{0,xx}) K \theta \, , \\
  && \chi_y = - \chi_{xx} - 2 \chi K \phi_{0,x} \, , \qquad
     \chi_t = \chi_{xxx} + 3 \chi_x K \phi_{0,x} - \frac{3}{2} \chi K (\phi_{0,y} - \phi_{0,xx}) \, .
\eez  
Then the system
\be
   \Omega_x = - \chi K \theta \, , \quad
   \Omega_y = - \chi K \theta_x + \chi_x K \theta \, , \quad
   \Omega_t = - \chi K \theta_{xx} + \chi_x K \theta_x - \chi_{xx} K \theta - 3 \chi K \phi_{0,x} K \theta \, ,
         \label{Omega_eqs}
\ee
is compatible and can thus be integrated to yield an $N \times N$ matrix solution $\Omega$. 
If $\Omega$ is invertible, then 
\be
    \phi = \phi_0 - \theta \, \Omega^{-1} \chi   \label{phi_new_solution}
\ee
is a new solution of (\ref{KmatrixpKP}). 

For vanishing\footnote{More generally, the following holds for any \emph{constant} $\phi_0$. But adding to
$\phi$ a constant matrix is an obvious symmetry of the pKP$_K$ equation.} 
seed solution, i.e., $\phi_0=0$, soliton solutions are obtained as follows. Let 
\bez
    \theta = \sum_{a=1}^A \theta_a \, e^{\vartheta(P_a)} \, , \qquad
  \chi = \sum_{b=1}^B e^{-\vartheta(Q_b)} \, \chi_b  \, ,  
\eez  
where $P_a, Q_b$ are constant $N \times N$ matrices, $\theta_a, \chi_b$ are constant $m \times N$, 
respectively $N \times n$ matrices, and
\be
     \vartheta(P) = x \, P + y \, P^2 + t \, P^3 \, .   \label{vartheta}
\ee
If, for all $a,b$, the matrices $P_a$ and $Q_b$ have no eigenvalue in common,  
there are unique $N \times N$ matrix solutions $W_{ba}$ of the Sylvester equations
\bez
     Q_b W_{ba} - W_{ba} P_a = \chi_b K \theta_a   \qquad a=1,\ldots,A, \quad b=1,\ldots,B \, .
\eez
Then (\ref{Omega_eqs}) is solved by 
\bez
    \Omega = \Omega_0 + \sum_{a=1}^A \sum_{b=1}^B e^{-\vartheta(Q_b)} \, W_{ba} \, e^{\vartheta(P_a)} \, ,
\eez
with a constant $N \times N$ matrix $\Omega_0$, 
and (\ref{phi_new_solution}) determines a soliton solution of (\ref{KmatrixpKP}) (and thus via $u = 2 \, \phi_x$ 
a solution of (\ref{KmatrixKP})), if $\Omega$ is everywhere invertible. 

\begin{remark}
\label{rem:hierarchy}
Corresponding solutions of the pKP$_K$ \emph{hierarchy} are obtained by replacing (\ref{vartheta}) with
$\vartheta(P) = \sum_{r=1}^\infty t_r \, P^r$, where $t_1=x$, $t_2=y$, $t_3=t$. 
\hfill $\square$
\end{remark}

\section{Pure soliton solutions}
\label{sec:pure}
In the following we restrict our considerations to the case where $A=B=1$. Then    
there remains only a single Sylvester equation,
\bez
     Q_1 W - W P_1 = \chi_1 K \theta_1 \, .
\eez
Moreover, we will restrict the matrices $P_1$ and $Q_1$ to be diagonal. It is convenient to name the diagonal entries
(``spectral parameters'') in two different ways, 
\bez
   &&  P_1 = \mathrm{diag}(p_{1,1},\ldots,p_{N,1}) = \mathrm{diag}(p_1,\ldots,p_N) \, , \\
   &&  Q_1 = \mathrm{diag}(p_{1,2},\ldots,p_{N,2}) = \mathrm{diag}(q_1,\ldots,q_N)  \, .
\eez
We further write
\bez
    \theta_1 = \left(\begin{array}{cccc} (q_1-p_1) \xi_1 & (q_2-p_2) \xi_2 
          & \cdots & (q_N-p_N) \xi_N \end{array} \right) \, , \qquad
    \chi_1 = \left(\begin{array}{c} \eta_1 \\ \vdots \\ \eta_N \end{array} \right) \, ,
\eez
where $\xi_i$ are $m$-component column vectors and $\eta_i$ are $n$-component row vectors.
Then the solution of the above Sylvester equation is given by 
\bez
     W = (w_{ij}) \, , \qquad
     w_{ij} = \frac{q_j - p_j}{q_i - p_j} \, \eta_i \, K \, \xi_j \qquad i,j=1,\ldots,N \, .
\eez
Furthermore, we set $\Omega_0 = I_N$, the $N \times N$ identity matrix. Hence $\Omega = (\Omega_{ij})$ with
\be
     \Omega_{ij} = \delta_{ij} + w_{ij} \, e^{\vartheta(p_j) - \vartheta(q_i)} \, ,  \label{Omega}
\ee
where $\delta_{ij}$ is the Kronecker delta. 
We call soliton solutions obtained from (\ref{phi_new_solution}), with the above restrictions, ``pure solitons''.
All what follows refers to them. 

We introduce 
\bez
    \vartheta_I := \sum_{i=1}^N \vartheta(p_{i,a_i})   \qquad \mbox{if} \quad
    I = (a_1,\ldots,a_N) \in \{1,2\}^N  \, .
\eez
Instead of using $(a_1,\ldots,a_N)$ as a subscript (or superscript), we will simply write $a_1 \ldots a_N$ 
in the following. For example, $\vartheta_{a_1 \ldots a_N} = \vartheta_{(a_1,\ldots,a_N)}$. 

 From (\ref{phi_new_solution}) we find that the pure soliton solutions of the pKP$_K$ equation are given by
\be
    \phi = \frac{F}{\tau} \, ,   \label{phi=F/tau}
\ee
with
\be
    \tau &:=& e^{\vartheta_{\boldsymbol{2}}} \, \det \Omega  \, ,   \label{tau_pure}  \\
    F &:=& - e^{\vartheta_{\boldsymbol{2}}} \, \theta_1 \, e^{\vartheta(P_1)} \, 
          \mathrm{adj}(\Omega) \, e^{-\vartheta(Q_1)} \, \chi_1 \, ,    \label{F_pure}
\ee          
where $\mathrm{adj}(\Omega)$ denotes the adjugate of the matrix $\Omega$ and $\boldsymbol{2} := 2\ldots 2 = (2,\ldots,2)$.   

\begin{proposition}
\label{prop:tau,F_expansion}
$\tau$ and $F$ have expansions
\be
      \tau = \sum_{I \in \{1,2\}^N} \mu_I \, e^{\vartheta_I} \, ,    \label{tau_pure_expansion}  \\
       F = \sum_{I \in \{1,2\}^N} M_I \, e^{\vartheta_I} \, , \label{F_pure_expansion}
\ee
with constants $\mu_I$ and constant $m \times n$ matrices $M_I$, where $\mu_{\boldsymbol{2}}=1$ and $M_{\boldsymbol{2}}=0$. 
\end{proposition}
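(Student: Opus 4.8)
The plan is to read off both expansions from one structural fact. Since $P_1$ and $Q_1$ are diagonal, (\ref{vartheta}) gives $e^{\vartheta(P_1)} = E_p := \mathrm{diag}(e^{\vartheta(p_1)},\ldots,e^{\vartheta(p_N)})$ and $e^{-\vartheta(Q_1)} = E_q^{-1}$ with $E_q := \mathrm{diag}(e^{\vartheta(q_1)},\ldots,e^{\vartheta(q_N)})$, so that (\ref{Omega}) reads $\Omega = I_N + A$ with the constant matrix $W$ dressed by diagonals, $A := E_q^{-1} W E_p$. Also $\boldsymbol{2}=(2,\ldots,2)$ and $p_{i,2}=q_i$ give $e^{\vartheta_{\boldsymbol{2}}} = \prod_{i=1}^N e^{\vartheta(q_i)} = \det E_q$. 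Hence all $(x,y,t)$-dependence of $\tau$ and $F$ resides in $E_p,E_q$, and because every (sub)determinant uses each row and column once, each ratio $e^{\vartheta(p_i)-\vartheta(q_i)}$ can occur at most to first order. For $S\subseteq\{1,\ldots,N\}$, multiplying the monomial $\prod_{i\in S}e^{\vartheta(p_i)-\vartheta(q_i)}$ by $e^{\vartheta_{\boldsymbol{2}}}$ produces $\prod_{i\in S}e^{\vartheta(p_i)}\prod_{i\notin S}e^{\vartheta(q_i)} = e^{\vartheta_I}$ with $I=(a_1,\ldots,a_N)$ defined by $a_i=1$ if $i\in S$ and $a_i=2$ otherwise; this bijection $S\leftrightarrow I$ will organize both (\ref{tau_pure_expansion}) and (\ref{F_pure_expansion}).

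For $\tau$ I would use the well-known expansion (by multilinearity in the rows) of $\det(I_N+A)$ as the sum over all subsets $S\subseteq\{1,\ldots,N\}$ of the principal minors $\det(A_S)$, with $\det(A_\emptyset):=1$. Since $E_p,E_q$ are diagonal, $A_S = (E_q^{-1})_S\, W_S\, (E_p)_S$, hence $\det(A_S) = \det(W_S)\prod_{i\in S}e^{\vartheta(p_i)-\vartheta(q_i)}$. Multiplying $\det\Omega = \sum_S\det(A_S)$ by $e^{\vartheta_{\boldsymbol{2}}}$ and applying the bijection gives (\ref{tau_pure_expansion}) with $\mu_I := \det(W_S)$; the term $S=\emptyset$ corresponds to $I=\boldsymbol{2}$ and yields $\mu_{\boldsymbol{2}} = \det(W_\emptyset) = 1$.

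For $F$ I would first rewrite (\ref{F_pure}), using $\theta = \theta_1 e^{\vartheta(P_1)} = \theta_1 E_p$ and $\chi = e^{-\vartheta(Q_1)}\chi_1 = E_q^{-1}\chi_1$ in the pure case, as $F = -\,e^{\vartheta_{\boldsymbol{2}}}\,\theta\,\mathrm{adj}(\Omega)\,\chi$, and then, for each index pair $\alpha,\beta$, convert the scalar $\theta_{\alpha\cdot}\,\mathrm{adj}(\Omega)\,\chi_{\cdot\beta}$ into a bordered determinant via the classical identity $u\,\mathrm{adj}(M)\,v = -\det\bigl(\begin{smallmatrix} M & v \\ u & 0\end{smallmatrix}\bigr)$, valid for a row vector $u$ and a column vector $v$. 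This gives $F_{\alpha\beta} = e^{\vartheta_{\boldsymbol{2}}}\,\det\bigl(\begin{smallmatrix}\Omega & \chi_{\cdot\beta}\\ \theta_{\alpha\cdot} & 0\end{smallmatrix}\bigr)$, an $(N+1)\times(N+1)$ determinant whose matrix equals $\mathrm{diag}(1,\ldots,1,0) + \bigl(\begin{smallmatrix} A & \chi_{\cdot\beta}\\ \theta_{\alpha\cdot} & 0\end{smallmatrix}\bigr)$. Using the analogous expansion $\det(D+B) = \sum_S\bigl(\prod_{i\notin S}D_{ii}\bigr)\det(B_S)$ for diagonal $D$: because the last diagonal entry here is $0$, only subsets $S'\cup\{N+1\}$ with $S'\subseteq\{1,\ldots,N\}$ survive, each with weight $1$, and after extracting $(E_p)_{S'}$ and $(E_q^{-1})_{S'}$ from the resulting bordered minor one gets $F_{\alpha\beta} = e^{\vartheta_{\boldsymbol{2}}}\sum_{S'}\bigl(\prod_{i\in S'}e^{\vartheta(p_i)-\vartheta(q_i)}\bigr)\,c^{\alpha\beta}_{S'}$, where $c^{\alpha\beta}_{S'}$ is the constant $(\alpha,\beta)$-bordered minor built from $W_{S'}$, the $S'$-components of column $\beta$ of $\chi_1$, and the $S'$-components of row $\alpha$ of $\theta_1$. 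Multiplying by $e^{\vartheta_{\boldsymbol{2}}}$ and using $S'\leftrightarrow I$ once more yields (\ref{F_pure_expansion}) with $M_I$ the constant $m\times n$ matrix of entries $c^{\alpha\beta}_{S'}$; for $S'=\emptyset$, i.e. $I=\boldsymbol{2}$, the bordered matrix degenerates to the $1\times 1$ matrix $(0)$, so $M_{\boldsymbol{2}} = 0$.

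The two applications of the principal-minor expansion and the reindexing $S\leftrightarrow I$ are routine. The step requiring genuine care is the bookkeeping in the $F$-part: getting the sign in the bordered-determinant identity right, remembering that the columns of $\theta_1$ carry the scalar prefactors $q_i-p_i$ and that $\theta_1,\chi_1$ are matrix- rather than scalar-valued (so the reduction to bordered determinants must be performed entrywise, for fixed $\alpha,\beta$), and verifying that once $E_p,E_q$ are extracted the surviving bordered minors are genuinely independent of $x,y,t$. Everything else proceeds by the same mechanism that settles $\tau$.
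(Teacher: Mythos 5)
Your argument is correct. For $\tau$ it is essentially the paper's proof in a tidier package: the paper expands $\det\Omega$ via the Leibniz formula and splits each monomial containing diagonal entries $\Omega_{ii}=1+A_{ii}$ into its two contributions, which is exactly the principal-minor identity $\det(I_N+A)=\sum_S\det(A_S)$ you invoke; the only added value is that you extract the diagonal dressings $E_p,E_q^{-1}$ and thereby obtain the explicit coefficient $\mu_I=\det(W_S)$, which the paper does not record. For $F$, however, your route is genuinely different. The paper stays with the adjugate: it uses the Laplace cofactor expansion $\det\Omega=\sum_j\Omega_{ij}\,\mathrm{adj}(\Omega)_{ji}$ to read off the possible phase factors of $\mathrm{adj}(\Omega)_{ji}$ from those already established for $\det\Omega$, concluding that every entry of $e^{\vartheta(P_1)}\mathrm{adj}(\Omega)e^{-\vartheta(Q_1)}$ carries a phase $e^{\vartheta_I-\vartheta_{\boldsymbol{2}}}$ and that none is constant (whence $M_{\boldsymbol{2}}=0$). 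You instead convert each entry $\theta_{\alpha\cdot}\,\mathrm{adj}(\Omega)\,\chi_{\cdot\beta}$ into an $(N+1)\times(N+1)$ bordered determinant and rerun the same principal-minor expansion with the degenerate diagonal entry forcing the border into every surviving term; the sign in $u\,\mathrm{adj}(M)\,v=-\det\left(\begin{smallmatrix}M&v\\u&0\end{smallmatrix}\right)$ cancels against the minus sign in (\ref{F_pure}), and $M_{\boldsymbol{2}}=0$ falls out because the $S'=\emptyset$ bordered minor is the $1\times1$ zero matrix. What your version buys is closed-form expressions for all the coefficients ($\mu_I$ as principal minors of $W$, $M_I$ as bordered minors), which the paper only computes case by case for $N=2,3$ later on; what the paper's version buys is brevity, since it never needs the bordered-determinant identity and piggybacks the $F$-statement entirely on the $\tau$-statement. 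Your closing caveats (entrywise reduction for matrix-valued $\theta_1,\chi_1$, the harmless constant prefactors $q_i-p_i$) are exactly the right points to watch and are handled correctly.
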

\begin{proof}
 From the definition of the determinant, 
$\det \Omega = \epsilon^{i_1 \ldots i_N} \, \Omega_{1 i_1} \cdots \Omega_{N i_N}$, with the Levi-Civita 
symbol $\epsilon^{i_1 \ldots i_N}$ and summation convention, we know that $\det \Omega$  
consists of a sum of monomials of order $N$ in the entries $\Omega_{ij}$. Here the latter are given by (\ref{Omega}). 
If no diagonal term $\Omega_{ii}$ appears in a monomial, its phase factor is 
$e^{\vartheta_{\boldsymbol{1}} - \vartheta_{\boldsymbol{2}}}$. If one diagonal entry 
$\Omega_{ii} = 1 + w_{ii} \, e^{\vartheta(p_{i,1}) - \vartheta(p_{i,2})}$ appears in a monomial,  
the latter splits into two parts. Only the part arising from the summand $1$ is different as now  
the phase factor is $e^{\vartheta_{\boldsymbol{1}} - \vartheta_{\boldsymbol{2}} - \vartheta(p_{i,1}) + \vartheta(p_{i,2})}$. 
From monomials containing several diagonal entries of $\Omega$, we obtain summands with a phase factor 
of the form 
\bez
  e^{\vartheta_{\boldsymbol{1}} - \vartheta_{\boldsymbol{2}} - \vartheta(p_{i_1,1}) +\vartheta(p_{i_1,2})
 + \cdots - \vartheta(p_{i_r,1}) + \vartheta(p_{i_r,2})}
 = e^{[\vartheta_{\boldsymbol{1}} - \vartheta(p_{i_1,1}) - \cdots - \vartheta(p_{i_r,1}) ]
   + \vartheta(p_{i_1,2}) + \cdots + \vartheta(p_{i_r,2}) - \vartheta_{\boldsymbol{2}} } \, .
\eez
Finally, from a monomial with $N$ diagonal entries of $\Omega$, we also obtain a constant term, namely $1$. 
Now our assertion (\ref{tau_pure_expansion}) follows since $\tau$ is $\det \Omega$ multiplied by $e^{\vartheta_{\boldsymbol{2}}}$. 
Clearly, $\mu_{\boldsymbol{2}}=1$. 

According to the Laplace (cofactor) expansion $\det \Omega = \sum_{j=1}^N \Omega_{ij} \, \mathrm{adj}(\Omega)_{ji}$ 
with respect to the $i$-th row, the term
$\Omega_{ij} \, \mathrm{adj}(\Omega)_{ji}$ consists of all summands in $\det \Omega$ having $\Omega_{ij}$ as a factor. 
(\ref{tau_pure_expansion}) implies that a summand of $e^{\vartheta_{\boldsymbol{2}}} \mathrm{adj}(\Omega)_{ji}$ then has 
a phase factor of the form $e^{\vartheta_I - \vartheta(p_{j,1}) + \vartheta(p_{i,2})}$, with some $I \in \{1,2\}^N$, so that 
$e^{\vartheta_{\boldsymbol{2}}} \, (e^{\vartheta(P_1)} \, \mathrm{adj}(\Omega) \, e^{-\vartheta(Q_1)})_{ji}$ has 
the phase factor $e^{\vartheta_I}$. Hence (\ref{F_pure_expansion}) holds.
Furthermore, no entry of $e^{\vartheta(P_1)} \, \mathrm{adj}(\Omega) \, e^{-\vartheta(Q_1)}$ is constant, hence $M_{\boldsymbol{2}}=0$. 
\end{proof}

\begin{remark}
The introduction of the redundant factor $e^{\vartheta_{\boldsymbol{2}}}$ in (\ref{phi=F/tau}), via the 
definitions (\ref{tau_pure}) and (\ref{F_pure}), achieves that $\tau$ and $F$ are linear combinations 
of exponentials $e^{\vartheta_I}$, $I \in \{1,2\}^N$, in which case we have a very convenient labelling. 
This is also so if we choose the factor $e^{-\vartheta_{\boldsymbol{1}}}$ instead, which  
leads to an expansion in terms of $e^{-\vartheta_I}$, now with $M_{\boldsymbol{1}}=0$. 
\hfill $\square$
\end{remark}

Regularity of a pure soliton solution requires $\mu_I \geq 0$ for all $I \in \{1,2\}^N$ (or 
equivalently $\mu_I \leq 0$ for all $I \in \{1,2\}^N$) and $\mu_I \neq 0$ for at least one $I$. 
If $\mu_I =0$ for some $I$, this means that the phase 
$\vartheta_I$ is not present in the expression for $\tau$. In this case one has to arrange the data 
in such a way that $M_I =0$ in order to avoid unbounded exponential growth of the soliton solution 
in some phase region. But we will disregard such cases and add the condition $\mu_I > 0$, $\forall I \in \{1,2\}^N$, 
to our definition of pure soliton solutions. 

It follows that the corresponding solution of the KP equation is given by  
\bez
     u = \frac{1}{\tau^2} \sum_{I,J \in \{1,2\}^N}  
         (p_J - p_I) (\mu_I M_J - \mu_J M_I) \, e^{\vartheta_I} \, e^{\vartheta_J} \, ,
\eez
where 
\bez
      p_I = p_{1,a_1} + \cdots + p_{N,a_N} \quad \mbox{if} \quad
      I = (a_1,\ldots,a_N) \, . 
\eez   
Using Jacobi's formula for the derivative of a determinant, we obtain
\bez
    \tau_x 
 &=& p_{\boldsymbol{2}} \, \tau + e^{\vartheta_{\boldsymbol{2}}} 
     \, \mathrm{tr}(\mathrm{adj}(\Omega) \, \Omega_x) 
 = p_{\boldsymbol{2}} \, \tau - e^{\vartheta_{\boldsymbol{2}}} \, \mathrm{tr}(\mathrm{adj}(\Omega) \, \chi K \theta) 
 = p_{\boldsymbol{2}} \, \tau - e^{\vartheta_{\boldsymbol{2}}} \, \mathrm{tr}(K \theta \, \mathrm{adj}(\Omega) \, \chi) 
   \nonumber \\ 
 &=& p_{\boldsymbol{2}} \, \tau + \mathrm{tr}(K \, F) \, ,
\eez
which implies
\be
     \mathrm{tr}(K \phi) = (\ln \tau)_x - p_{\boldsymbol{2}} \, , \label{tr(Kphi)}
\ee
and thus
\bez
     \mathrm{tr}(K u) = 2 \, (\ln \tau)_{xx} \, .
\eez
Using (\ref{phi=F/tau}) in (\ref{tr(Kphi)}), and reading off the coefficient of $e^{\vartheta_I}$, we find
\be
      \mathrm{tr}(K \, M_I) = (p_I - p_{\boldsymbol{2}}) \, \mu_I \, .   \label{trKM}
\ee

\begin{remark}
If $n=1$, (\ref{tr(Kphi)}) reads 
\bez
     K \phi = (\ln \tau)_x - p_{\boldsymbol{2}} \, ,
\eez
and  
\bez 
        K u = 2 \, (\ln \tau)_{xx} 
\eez        
is a solution of the scalar KP equation. If $n>1$, $\mathrm{tr}(K u)$ is not in general a solution 
of the scalar KP equation. 
% Notwithstanding this, the matrix KP solution lives on its support.
\hfill $\square$
\end{remark}

\begin{remark}
\label{rem:KdV_reduction} 
Dropping the redundant factor $e^{\vartheta_{\boldsymbol{2}}}$ in (\ref{tau_pure}) and (\ref{F_pure}) means 
that we have to multiply the above expressions (\ref{tau_pure_expansion}) and (\ref{F_pure_expansion}) for 
$\tau$ and $F$ by $e^{-\vartheta_{\boldsymbol{2}}}$. 
It is then evident that $\phi$ only depends on differences of phases of the form 
$\vartheta(p_{i,1}) - \vartheta(p_{i,2}) = (p_{i,1} - p_{i,2}) \, x + (p_{i,1}^2 - p_{i,2}^2) \, y + (p_{i,1}^3 - p_{i,2}^3) \, t$. 
As a consequence, setting $p_{i,2} = - p_{i,1}$, i.e., $q_i = -p_i$, eliminates the $y$-terms in all phases. 
This means that, under this condition for the parameters, we could have started as well with 
$\vartheta(P) = x \, P + t \, P^3$, hence without the $y$-term in (\ref{vartheta}). In this way we make 
contact with the KdV$_K$ reduction of KP$_K$.  
\hfill $\square$
\end{remark}

\section{Tropical limit of pure soliton solutions}
\label{sec:trop}
A crucial point is that we define the tropical limit of the matrix soliton solution via the tropical 
limit of the scalar function $\tau$ (cf. \cite{DMH11KPT,DMH12KPBT,DMH14KdV}). Let
\be
    \phi_I := \phi\Big|_{\vartheta_J \to -\infty, J \neq I} = \frac{M_I}{\mu_I} \, . \label{phi_I}
\ee
In a region where a phase $\vartheta_I$ dominates all others, in the sense that 
$\log(\mu_I \, e^{\vartheta_I}) > \log(\mu_J \, e^{\vartheta_J})$ for all participating $J \neq I$, 
the tropical limit of the potential $\phi$ is given by (\ref{phi_I}). 
It should be noticed that these expressions do not depend on the coordinates $x,y,t$. 

The boundary between the regions associated with the phases $\vartheta_I$ and $\vartheta_J$ 
is determined by the condition 
\be
     \mu_I \, e^{\vartheta_I} = \mu_J \, e^{\vartheta_J} \, .    \label{phase_boundary}
\ee
Not all parts of such a boundary are visible at fixed time, since some of them may lie in a region where 
a third phase dominates the two phases. The tropical limit of a soliton solution at a fixed time $t$ has support on the 
visible parts of the boundaries between the regions associated with phases appearing in $\tau$. 
On such a visible boundary segment, the value of $u$ is given by
\bez
     u_{IJ} = \frac{1}{2} (p_I - p_J) \left( \phi_I - \phi_J \right)  \, .
\eez

For $I = (a_1,\ldots,a_N)$ we set
\bez
     I_k(a) = (a_1,\ldots,a_{k-1},a,a_{k+1}, \ldots,a_N) \, .
\eez
At fixed time, the set of line segments associated with the $k$-th soliton are obtained from (\ref{phase_boundary}) with 
$I = I_k(1)$ and $J=I_k(2)$, for all possible $I$. They satisfy
\be
     (p_{k,2} - p_{k,1}) \, x + (p_{k,2}^2 - p_{k,1}^2) \, y + (p_{k,2}^3 - p_{k,1}^3) \, t 
     + \ln \frac{\mu_{I_k(2)}}{\mu_{I_k(1)}} 
     = 0 \, .    \label{kth_soliton_lines}
\ee
All these line segments have the same slope $-(p_{k,2}+p_{k,1})^{-1}$ in the $xy$-plane, hence they are parallel. 
The shifts between them are given by 
\bez
    \delta^{(k)}_{IJ} = \ln \Big( 
    \frac{\mu_{I_k(2)}}{\mu_{I_k(1)}}
    \frac{\mu_{J_k(1)}}{\mu_{J_k(2)}}
    \Big) \, .
\eez
They give rise to the familiar asymptotic ``phase shifts'' of line solitons. 
The tropical limit of $u$ on a visible line segment of the $k$-th soliton is given by
\bez
    u_{I_k(1) \, I_k(2)}
  = \frac{1}{2} (p_{k,1} - p_{k,2}) \left( \phi_{I_k(1)} - \phi_{I_k(2)} \right) \, .
\eez
The value of $u$ at a visible triple phase coincidence is
\bez
  u_{IJL} = \frac{4}{9} ( u_{IJ} + u_{IL} + u_{JL} ) \
            = \frac{2}{9} \Big( (2 p_I - p_J - p_L) \phi_I + (2 p_J - p_I - p_L) \phi_J + (2 p_L - p_I - p_J) \phi_L \Big) \, .
\eez
Instead of the above expressions for the tropical values of $u$ we will rather consider
\be
    \hat{u}_{IJ} = \frac{\phi_I - \phi_J}{p_I - p_J}  \, ,  \label{hatu_IJ}
\ee
which has the form of a discrete derivative. 
Since (\ref{trKM}) and (\ref{phi_I}) imply 
\be
     \mathrm{tr}(K \phi_I) = p_I - p_{\boldsymbol{2}} \, ,  \label{tr(K phi_I)}
\ee
the latter values are normalized in the sense that
\be
     \mathrm{tr}(K \hat{u}_{IJ}) = 1 \, .    \label{trKu}
\ee

If $I=(a_1,\ldots,a_N)$ and $i \neq j$, let
\bez
     I_{ij}(a,b) = (a_1,\ldots,a_N)\Big|_{a_i \mapsto a,a_j \mapsto b} \, .
\eez
The normalized tropical values of $u$ satisfy 
\bez
  &&\hspace{-.8cm}  ( p_i - q_i ) \, \hat{u}_{I_{ij}(1,1) \, I_{ij}(2,1)}  
     + ( p_j - q_j ) \, \hat{u}_{I_{ij}(2,1) \, I_{ij}(2,2)}
   = (p_i - q_i + p_j - q_j) \, \hat{u}_{I_{ij}(1,1) \, I_{ij}(2,2)} \, , \\  
  &&\hspace{-.8cm} ( p_i - q_i ) \, \hat{u}_{I_{ij}(1,2) \, I_{ij}(2,2)}  
     + (p_j - q_j ) \, \hat{u}_{I_{ij}(1,1) \, I_{ij}(1,2)}  
  = ( p_i - q_i + p_j - q_j ) \, \hat{u}_{I_{ij}(1,1) \, I_{ij}(2,2)} \, , \\
  &&\hspace{-.8cm} ( p_i - q_i ) \, \hat{u}_{I_{ij}(1,1) \, I_{ij}(2,1)}  
     + (p_j - q_j ) \, \hat{u}_{I_{ij}(1,1),I_{ij}(1,2)}  
  = ( p_i - q_i + q_j - p_j ) \, \hat{u}_{I_{ij}(1,2) \, I_{ij}(2,1)} \, , \\
  &&\hspace{-.8cm} ( p_i - q_i ) \, \hat{u}_{I_{ij}(1,2) \, I_{ij}(2,2)}  
     + (p_j - q_j ) \, \hat{u}_{I_{ij}(2,1),I_{ij}(2,2)}  
  = ( p_i - q_i + q_j - p_j ) \, \hat{u}_{I_{ij}(1,2) \, I_{ij}(2,1)} \, .
\eez
These identities are simply consequences of the definition of $\hat{u}_{IJ}$. They linearly relate the 
(normalized) polarizations at points of the tropical limit graph, where three lines meet.

\section{Pure 2-soliton solutions}
\label{sec:2solitons}
Let $N=2$. Then we have 
\bez
    \tau = e^{\vartheta_{22}} + \kappa_{11} \, e^{\vartheta_{12}} + \kappa_{22} \, e^{\vartheta_{21}} 
           + \alpha \, \kappa_{11} \kappa_{22} \, e^{\vartheta_{11}} \, ,           
\eez
where
\bez
     \kappa_{ij} = \eta_i K \xi_j \, , \qquad 
    \alpha = 1 - \frac{(q_1-p_1)(q_2-p_2) \, \kappa_{12} \, \kappa_{21}}{(q_1-p_2)(q_2-p_1) \, \kappa_{11} \, \kappa_{22}} \, ,
\eez
and 
\bez
    F &=& (q_1-p_1)(q_2-p_2) \, \Big( \frac{\kappa_{22}}{p_2-q_2} \, \xi_1 \otimes \eta_1 
        + \frac{\kappa_{11}}{p_1-q_1} \, \xi_2 \otimes \eta_2 
        + \frac{\kappa_{12}}{q_1-p_2} \, \xi_1 \otimes \eta_2 \\
      && + \frac{\kappa_{21}}{q_2-p_1} \, \xi_2 \otimes \eta_1 \Big) e^{\vartheta_{11}} 
        + (p_1-q_1) \, \xi_1 \otimes \eta_1 \, e^{\vartheta_{12}}
        + (p_2-q_2) \, \xi_2 \otimes \eta_2 \, e^{\vartheta_{21}} \, .
\eez
The tropical values of the pKP$_K$ solution $\phi$ in the dominant phase regions are then given by 
\bez
   &&  \phi_{11} = \frac{(q_1-p_1)(q_2-p_2)}{\alpha \, \kappa_{11} \, \kappa_{22}} 
                  \Big( \frac{\kappa_{22}}{ p_2-q_2} \, \xi_1 \otimes \eta_1
                   + \frac{\kappa_{11}}{ p_1-q_1} \, \xi_2 \otimes \eta_2 \\
   && \hspace{1.5cm}   + \frac{\kappa_{12}}{ q_1-p_2 } \, \xi_1 \otimes \eta_2
                   + \frac{\kappa_{21}}{ q_2-p_1 } \, \xi_2 \otimes \eta_1 \Big) \, , \\
 && \phi_{12} = (p_1-q_1) \frac{\xi_1 \otimes \eta_1}{\kappa_{11}} \, , \quad
    \phi_{21} = (p_2-q_2) \frac{\xi_2 \otimes \eta_2}{\kappa_{22}} \, , \quad
    \phi_{22} = 0 \, .
\eez

\begin{remark}
The above values $\phi_{ab}$ solve the following nonlinear equation,
\be
 && \Big(1 + \mathrm{tr} \frac{K (\phi_{12} - \phi_{22}) K (\phi_{21}-\phi_{22})}{(q_1-p_2) (p_1-q_2)} \Big) 
           (\phi_{11} - \phi_{22})  
  - \frac{(\phi_{12} - \phi_{22}) K (\phi_{21} - \phi_{22})}{q_1-p_2} \nonumber \\
 && + \frac{(\phi_{21} - \phi_{22}) K (\phi_{12} - \phi_{22})}{p_1-q_2} - (\phi_{12} - \phi_{22}) 
   - (\phi_{21} - \phi_{22}) = 0 \, .   \label{nonlin_phi_eq}
\ee
Addressing more than two solitons, non-zero counterparts of $\phi_{22}$ will show up, as displayed  
in this equation.
\hfill $\square$
\end{remark}

For the tropical values of $\hat{u}$ along the phase region boundaries, we obtain
\be
 && u_{1,\mathrm{in}} := \hat{u}_{11,21} 
    = \alpha^{-1} \Big(1_m - \frac{q_2-p_2}{q_2-p_1} \, \frac{\xi_2 \otimes \eta_2}{\kappa_{22}} K \Big)
     \frac{\xi_1 \otimes \eta_1}{\kappa_{11}} \Big( 1_n - \frac{q_2-p_2}{q_1-p_2} K \frac{\xi_2 \otimes \eta_2}{\kappa_{22}} \Big)
      \, , \nonumber \\
 && u_{2,\mathrm{in}} := \hat{u}_{21,22} = \frac{\xi_2 \otimes \eta_2}{\kappa_{22}} \, , \nonumber \\
 && u_{1,\mathrm{out}} := \hat{u}_{12,22}  = \frac{\xi_1 \otimes \eta_1}{\kappa_{11}} \, , \nonumber \\
 && u_{2,\mathrm{out}} := \hat{u}_{11,12}  
    = \alpha^{-1} \Big(1_m - \frac{q_1-p_1}{q_1-p_2} \, \frac{\xi_1 \otimes \eta_1}{\kappa_{11}} K \Big)
     \frac{\xi_2 \otimes \eta_2}{\kappa_{22}} 
     \Big( 1_n - \frac{q_1-p_1}{q_2-p_1} K \frac{\xi_1 \otimes \eta_1}{\kappa_{11}} \Big) \, , \label{2s_u_i,in/out}
\ee
where $1_m$ stands for the $m \times m$ identity matrix. For the in/out classification, see Fig.~\ref{fig:3x2} below.  
All the matrices in (\ref{2s_u_i,in/out}) have rank one, which is not at all obvious from the form of $\phi_{ab}$. 
We obtain the following nonlinear relation between ``incoming'' and ``outgoing'' polarizations,
\be
  &&  u_{1,\mathrm{out}} = \alpha_{\mathrm{in}}^{-1} \Big( 1_m - \frac{q_2-p_2}{p_1-p_2} u_{2,\mathrm{in}} K \Big) u_{1,\mathrm{in}} 
           \Big(1_n - \frac{p_2-q_2}{q_1-q_2} K u_{2,\mathrm{in}} \Big)  \, , \nonumber \\
  &&  u_{2,\mathrm{out}} = \alpha_{\mathrm{in}}^{-1} \Big( 1_m - \frac{q_1-p_1}{q_1-q_2} u_{1,\mathrm{in}} K \Big) u_{2,\mathrm{in}} 
         \Big( 1_n - \frac{p_1-q_1}{p_1-p_2} K u_{1,\mathrm{in}} \Big) \, , 
         \label{YBmap_u_in->u_out}
\ee
where 
\bez
  \alpha _{\mathrm{in}} = 1 
    - \frac{(p_1-q_1)(p_2-q_2)}{(p_1-p_2)(q_1-q_2)} \mathrm{tr}\left( K u_{1,\mathrm{in}} K u_{2,\mathrm{in}} \right) \, .
\eez
We note that $\alpha \, \alpha _{\mathrm{in}} = 1$. 
(\ref{YBmap_u_in->u_out}) is a new nonlinear Yang-Baxter map with parameters $p_i,q_i$, $i=1,2$. 

Writing
\bez
    u_{a,\mathrm{in}} = \frac{\xi_{a,\mathrm{in}} \otimes \eta_{a,\mathrm{in}}}{\eta_{a,\mathrm{in}} K \xi_{a,\mathrm{in}}} 
      \, , \quad
    u_{a,\mathrm{out}} = \frac{\xi_{a,\mathrm{out}} \otimes \eta_{a,\mathrm{out}}}{\eta_{a,\mathrm{out}} K \xi_{a,\mathrm{out}}} 
    \qquad a=1,2 \, ,
\eez
determines $\xi_{1,\mathrm{in/out}}$ and $\eta_{1,\mathrm{in/out}}$ up to scalings. 
We find
\bez
  && \xi_{1,\mathrm{out}} = \alpha_{\mathrm{in}}^{-1/2} \Big( 1_m - \frac{p_2 - q_2}{p_2-p_1} \, 
   \frac{\xi_{2,\mathrm{in}} \otimes \eta_{2,\mathrm{in}}}{\eta_{2,\mathrm{in}} K \xi_{2,\mathrm{in}}} K \Big)
   \xi_{1,\mathrm{in}} \, , \\
  && \xi_{2,\mathrm{out}} = \alpha_{\mathrm{in}}^{-1/2} \Big( 1_m - \frac{q_1 - p_1}{q_1-q_2} \, 
   \frac{\xi_{1,\mathrm{in}} \otimes \eta_{1,\mathrm{in}}}{\eta_{1,\mathrm{in}} K \xi_{1,\mathrm{in}}} K \Big)
   \xi_{2,\mathrm{in}} \, , \\
  && \eta_{1,\mathrm{out}} = \alpha_{\mathrm{in}}^{-1/2} \eta_{1,\mathrm{in}} 
    \Big( 1_n - \frac{q_2 - p_2}{q_2-q_1} \, K \, \frac{\xi_{2,\mathrm{in}} \otimes \eta_{2,\mathrm{in}}}{\eta_{2,\mathrm{in}} K \xi_{2,\mathrm{in}}} \Big) 
    \, , \\
  && \eta_{2,\mathrm{out}} = \alpha_{\mathrm{in}}^{-1/2} \eta_{2,\mathrm{in}} 
    \Big( 1_n - \frac{p_1 - q_1}{p_1-p_2} \, K \, \frac{\xi_{1,\mathrm{in}} \otimes \eta_{1,\mathrm{in}}}{\eta_{1,\mathrm{in}} K \xi_{1,\mathrm{in}}} \Big) 
    \, ,
\eez
and
\bez
   ( \xi_{1,\mathrm{in}} , \eta_{1,\mathrm{in}} ; \xi_{2,\mathrm{in}} , \eta_{2,\mathrm{in}} ) 
      \mapsto  
   ( \xi_{1,\mathrm{out}} , \eta_{1,\mathrm{out}} ; \xi_{2,\mathrm{out}} , \eta_{2,\mathrm{out}}) 
\eez
is another form of the above Yang-Baxter map. 

\begin{remark}
In (\ref{2s_u_i,in/out}) we found that $u_{2,in}$ and $u_{1,out}$ have a simple elementary form. They are the polarizations 
at the two boundary lines of the dominating phase region numbered by $22 = (2,2)$, see Fig.~\ref{fig:3x2}.  
We know from Proposition~\ref{prop:tau,F_expansion} that it is special since $M_{22}=0$. Considering an ``evolution'' 
in negative $x$-direction (instead of $y$-direction), thus offers a more direct derivation of the Yang-Baxter map.  
\hfill $\square$
\end{remark}

\begin{example}
\label{ex:3x2}
Let $m=3$ and $n=2$. Choosing
\bez
     p_1 = - 3/4 \, , \quad
     p_2 = 1/4 \, , \quad
     q_1 = -1/4 \, , \quad
     q_2 = 3/4 \, ,
\eez
and
\bez
   \eta_1 = (1,0) \, , \quad 
   \eta_2 = (0,1) \, , \quad
   \xi_1 = \left( \begin{array}{c} 2/3 \\ 5 \\ -2 \end{array} \right) \, , \quad
   \xi_2 = \left( \begin{array}{c} 1 \\ 2/3 \\ 2 \end{array} \right) \, , \quad
   K = \left( \begin{array}{ccc} 1 & 1 & 1 \\ 1 & 2 & 1 \end{array} \right) \, ,
\eez
we obtain the first contour plot, at $t=0$, shown in Fig.~\ref{fig:3x2}. Fig.~\ref{fig:3x2_u-plots} shows 
plots of the components of the transpose of $u$. 
Choosing instead $p_2$ close to $q_1$ reveals an ``inner structure'' of crossing points, see the second 
plot in Fig.~\ref{fig:3x2}. This is the (phase) shift mentioned in Section~\ref{sec:trop}.
\begin{figure} 
\begin{center}
\includegraphics[scale=.2]{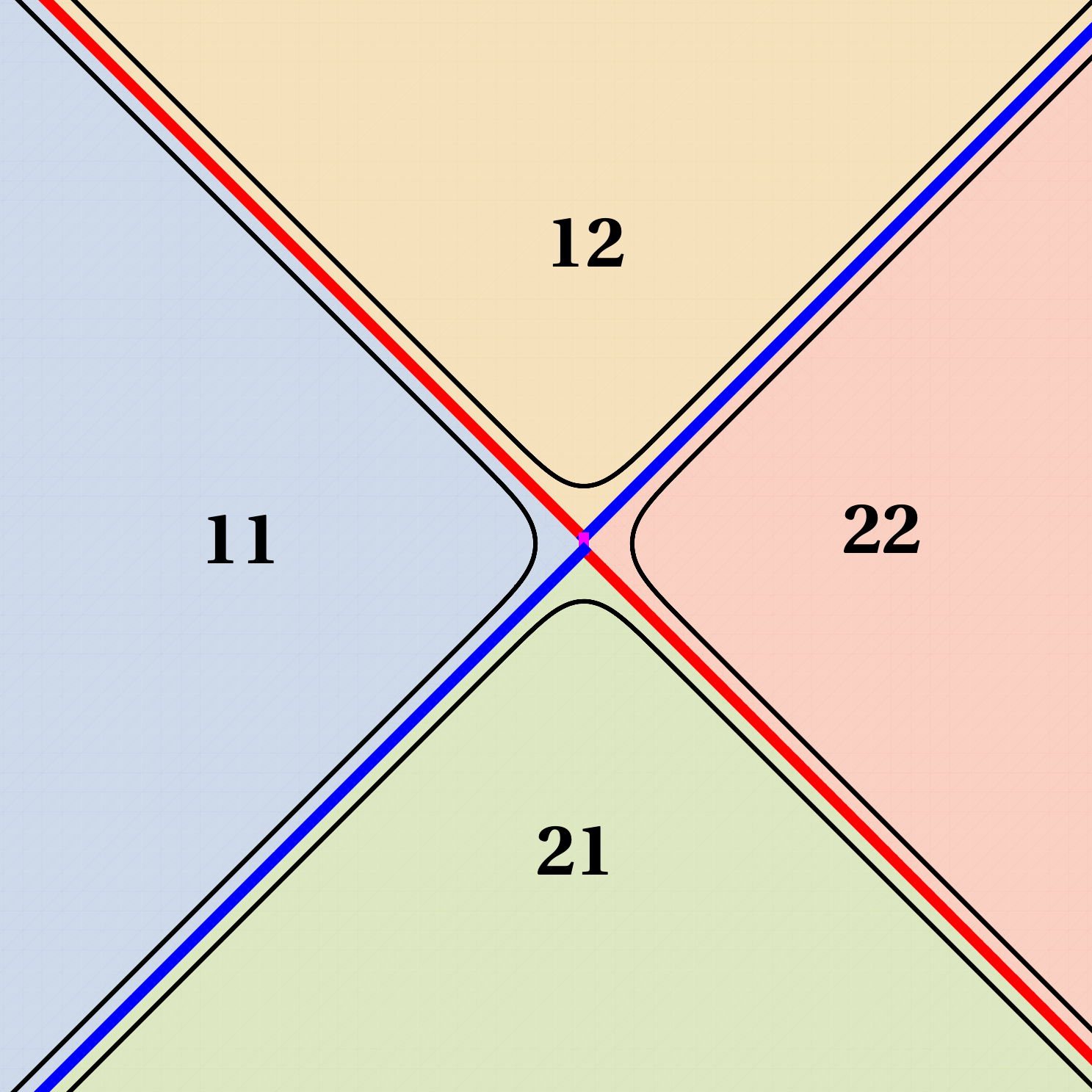} 
\hspace{1cm}
\includegraphics[scale=.2]{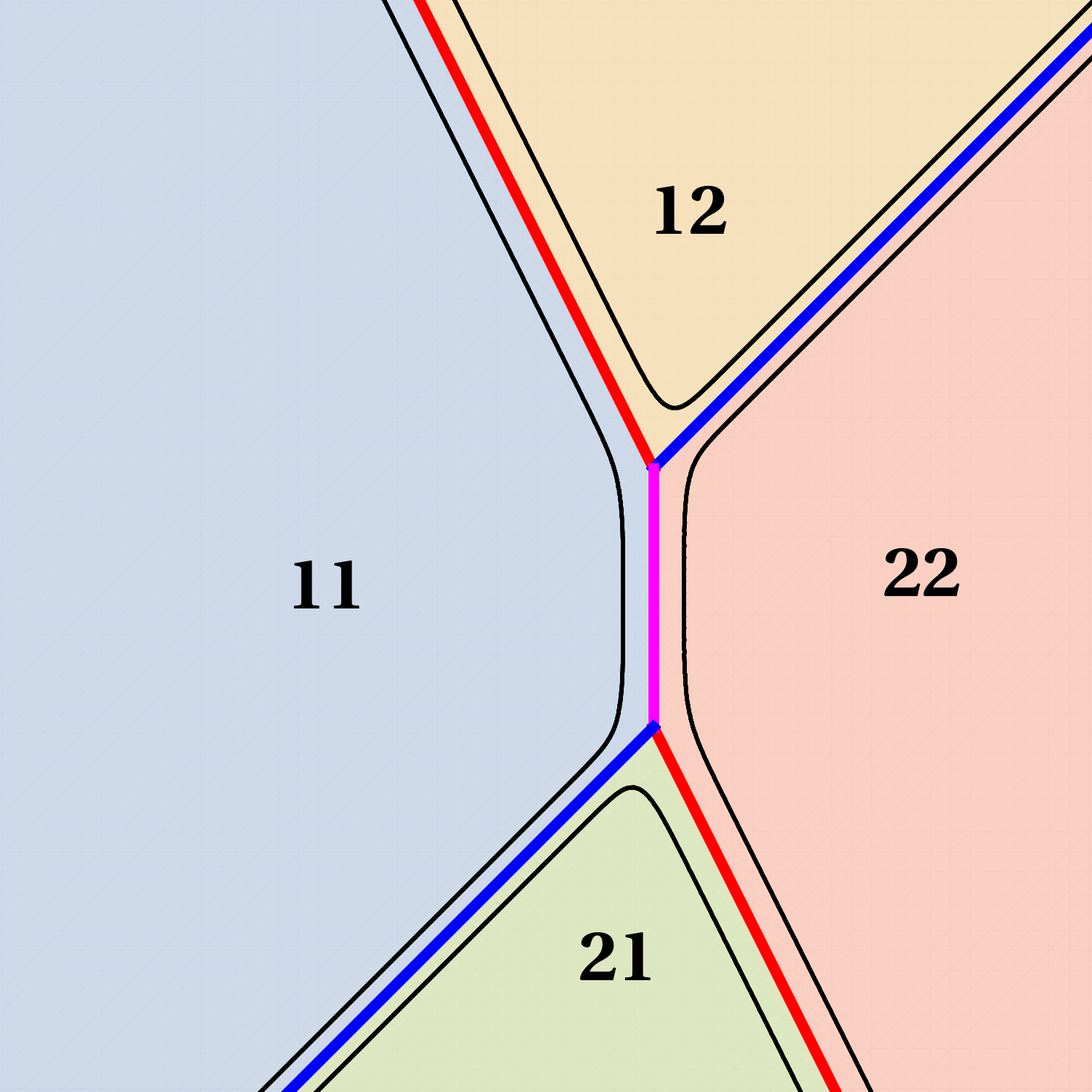} 
\hspace{1cm}
\includegraphics[scale=.2]{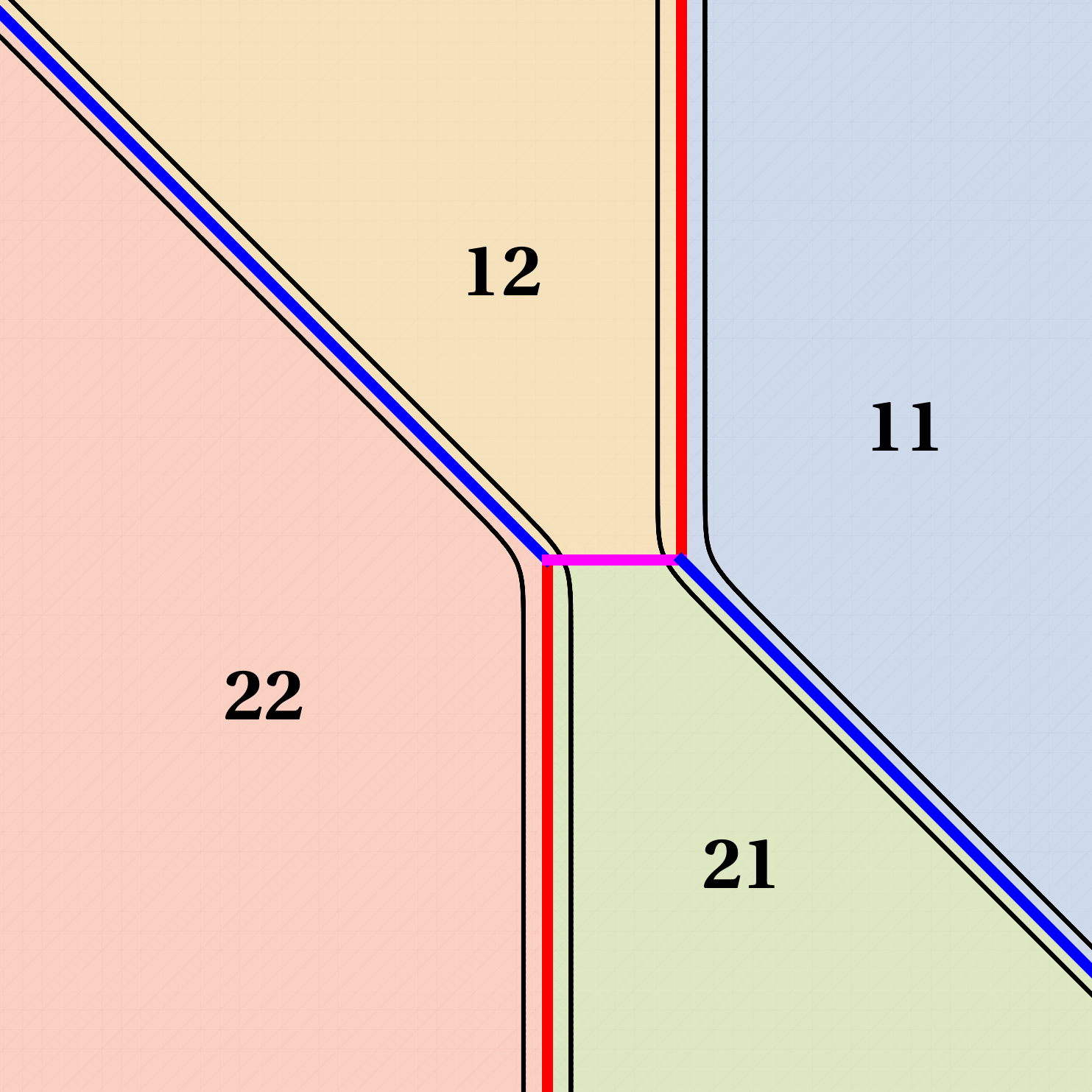} 
\end{center}
\caption{The first is a contour plot of $\mathrm{tr}(K u) = 2 (\ln \tau)_{xx}$ for a 2-soliton solution of 
the $2 \times 3$ matrix KP equation, at $t=0$ in the $xy$-plane, using the data of Example~\ref{ex:3x2}. 
Viewed as a process in $y$-direction, the YB map takes the values of 
the KP variable on the lower two legs to those on the upper two. A number $ij$ indicates the respective 
dominating phase region. 
In the second plot the value of $p_2$ is replaced by $-1/4+10^{-5}$, so that $p_2$ is very close to $q_1$. Here 
a boundary segment between phase regions $11$ and $22$ is visible.  
The third plot presents an example, where the parameters of the 2-soliton solution are now chosen such that 
the latter boundary is hidden and instead a boundary segment between phase regions $12$ and $21$ is visible.  
\label{fig:3x2} }
\end{figure}
\begin{figure} 
\begin{center}
\includegraphics[scale=.45]{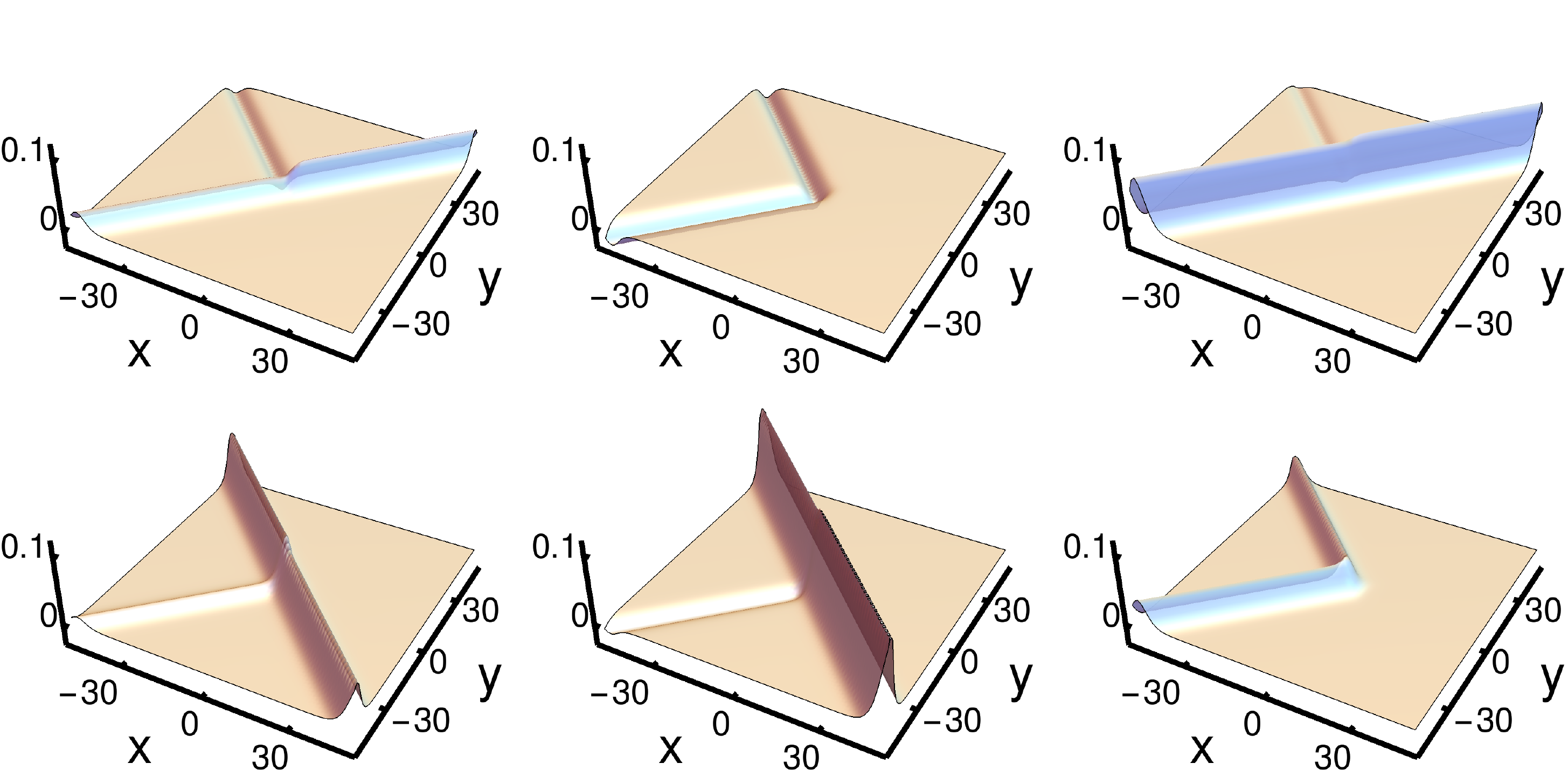} 
\end{center}
\caption{Plot of the six components of the transpose of $u$ at $t=0$ for the solution of the $3 \times 2$ 
matrix KP equation with the (first) data specified in Example~\ref{ex:3x2}. The components 
are localized exactly where $\mathrm{tr}(K u)$ is localized, cf. Fig.~\ref{fig:3x2}. 
\label{fig:3x2_u-plots} }
\end{figure}
\hfill $\square$
\end{example}

\begin{remark}
We should stress that the relevant structures are actually \emph{three}-dimensional and our figures only 
display a two-dimensional cross section. Instead of displaying structures in the $xy$-plane at constant $t$, 
we may as well look at those in the $xt$-plane at constant $y$. The latter becomes relevant if we consider 
the KdV$_K$ reduction.  
\hfill $\square$
\end{remark}

\begin{remark}
Soliton solutions of KdV$_K$ are obtained from those of KP$_K$ by setting $q_i = - p_i$, $i=1,\ldots,N$, 
see Remark~\ref{rem:KdV_reduction}.
Then the above equations reduce to
\bez
  && \xi_{1,\mathrm{out}} = \alpha_{\mathrm{in}}^{-1/2} \Big( 1_m - \frac{ 2 p_2}{p_2-p_1} \, 
   \frac{\xi_{2,\mathrm{in}} \otimes \eta_{2,\mathrm{in}}}{\eta_{2,\mathrm{in}} K \xi_{2,\mathrm{in}}} K \Big)
   \xi_{1,\mathrm{in}} \, , \\
  && \xi_{2,\mathrm{out}} = \alpha_{\mathrm{in}}^{-1/2} \Big( 1_m - \frac{2 p_1}{p_1-p_2} \, 
   \frac{\xi_{1,\mathrm{in}} \otimes \eta_{1,\mathrm{in}}}{\eta_{1,\mathrm{in}} K \xi_{1,\mathrm{in}}} K \Big)
   \xi_{2,\mathrm{in}} \, , \\
  && \eta_{1,\mathrm{out}} = \alpha_{\mathrm{in}}^{-1/2} \eta_{1,\mathrm{in}} 
    \Big( 1_n - \frac{2 p_2}{p_2-p_1} \, K \, \frac{\xi_{2,\mathrm{in}} \otimes \eta_{2,\mathrm{in}}}{\eta_{2,\mathrm{in}} K \xi_{2,\mathrm{in}}} \Big) 
    \, , \\
  && \eta_{2,\mathrm{out}} = \alpha_{\mathrm{in}}^{-1/2} \eta_{2,\mathrm{in}} 
    \Big( 1_n - \frac{2 p_1}{p_1-p_2} \, K \, \frac{\xi_{1,\mathrm{in}} \otimes \eta_{1,\mathrm{in}}}{\eta_{1,\mathrm{in}} K \xi_{1,\mathrm{in}}} \Big) 
    \, ,  
\eez
with 
\bez
    \alpha_{\mathrm{in}} = 1 + \frac{4 p_1 p_2}{(p_1-p_2)^2} \, 
    \frac{\eta_{1,\mathrm{in}} K \xi_{2,\mathrm{in}} \, \eta_{2,\mathrm{in}} K \xi_{1,\mathrm{in}}}{\eta_{1,\mathrm{in}} K 
      \xi_{1,\mathrm{in}} \, \eta_{2,\mathrm{in}} K \xi_{2,\mathrm{in}}} \, . 
\eez
If $K$ is the $N \times N$ identity matrix, this becomes the Yang-Baxter map first found by Veselov \cite{Vese03},  
also see \cite{Gonc+Vese04,Suri+Vese03}. 
The factor $\alpha_{\mathrm{in}}^{-1/2}$ is missing in these publications, but such a factor is necessary for the map 
to satisfy the Yang-Baxter equation. One can avoid the square root at the price of having an asymmetric 
appearance of factors $\alpha_{\mathrm{in}}^{-1}$. 
\hfill $\square$
\end{remark}

\subsection{Pure column vector 2-soliton solutions}
We set $n=1$. Now the $\eta_i$ are scalars and drop out of the relevant formulas. Introducing
\bez
     \hat{\xi}_i = \frac{\xi_i}{K \xi_i} \, ,
\eez
we have
\bez
    u_{1,\mathrm{in}} = \frac{p_1-q_2}{p_1-p_2} \hat{\xi}_1 + \frac{q_2-p_2}{p_1-p_2} \hat{\xi}_2 \, , \quad
    u_{2,\mathrm{in}} = \hat{\xi}_2 \, , \quad
    u_{1,\mathrm{out}} = \hat{\xi}_1 \, , \quad
    u_{2,\mathrm{out}} = \frac{p_1-q_1}{p_1-p_2} \hat{\xi}_1 + \frac{q_1-p_2}{p_1-p_2} \hat{\xi}_2 \, ,
\eez
and thus
\bez
    (u_{1,\mathrm{out}}, u_{2,\mathrm{out}}) = (u_{1,\mathrm{in}}, u_{2,\mathrm{in}}) 
     \left( \begin{array}{cc} \frac{p_1-p_2}{p_1-q_2} & \frac{p_1-q_1}{p_1-q_2} \\ \frac{p_2-q_2}{p_1-q_2} & \frac{q_1-q_2}{p_1-q_2}
     \end{array} \right) \, .
\eez
Generalizing the matrix that appears on the right hand side to
\be
  R(p_i,q_i;p_j,q_j) = \left( \begin{array}{cc} \frac{p_i-p_j}{p_i-q_j} & \frac{p_i-q_i}{p_i-q_j} \\ \frac{p_j-q_j}{p_i-q_j} & \frac{q_i-q_j}{p_i-q_j}
     \end{array} \right) \, ,    \label{Rmatrix}
\ee
and letting this act from the right on the $i$-th and $j$-th slot of a three-fold direct sum, 
the Yang-Baxter equation holds. 
This can be checked directly or inferred from a 3-soliton solution, see Section~\ref{sec:3solitons}.

\begin{remark}
The reduction to vector KdV$_K$ via $q_i = - p_i$ (see Remark~\ref{rem:KdV_reduction}) leads to
\be
   R(p_i,p_j) = \left( \begin{array}{cc} 
       \frac{p_i - p_j}{p_i+p_j} & \frac{2 p_i}{p_i+p_j} \\
       \frac{2 p_j}{p_i+p_j} & \frac{p_j - p_i}{p_i+p_j}
       \end{array} \right) \, .     \label{R_KdV}
\ee     
This rules the evolution of initial polarizations (at $t \ll 0$) step by step along the tropical limit graph 
in two-dimensional space-time. 
The $R$-matrix (\ref{R_KdV}) also describes the elastic collision of non-relativistic particles with masses 
$p_i$ in one dimension, see \cite{Koul17}. 
\hfill $\square$
\end{remark}

\subsection{Pure row vector 2-soliton solutions}
Now we set $m=1$. Then the $\xi_i$ are scalars and drop out of the relevant formulas. Introducing
\bez
     \hat{\eta}_i = \frac{\eta_i}{\eta_i K} \, ,
\eez
we have
\bez
    u_{1,\mathrm{in}} = \frac{q_1-p_2}{q_1-q_2} \hat{\eta}_1 + \frac{p_2-q_2}{q_1-q_2} \hat{\eta}_2 \, , \quad
    u_{2,\mathrm{in}} = \hat{\eta}_2 \, , \quad
    u_{1,\mathrm{out}} = \hat{\eta}_1 \, , \quad
    u_{2,\mathrm{out}} = \frac{q_1-p_1}{q_1-q_2} \hat{\eta}_1 + \frac{p_1-q_2}{q_1-q_2} \hat{\eta}_2 \, ,
\eez
so that
\bez
    \left( \begin{array}{c} u_{1,\mathrm{out}} \\ u_{2,\mathrm{out}} \end{array} \right)
  =  \left( \begin{array}{cc} \frac{q_2-q_1}{p_2-q_1} & \frac{p_2-q_2}{p_2-q_1} \\ \frac{p_1-q_1}{p_2-q_1} & \frac{p_2-p_1}{p_2-q_1}
     \end{array} \right)
     \left( \begin{array}{c} u_{1,\mathrm{in}} \\ u_{2,\mathrm{in}} \end{array} \right) \, ,
\eez
which determines a Yang-Baxter map. Let
\bez
    \tilde{R}(p_i,q_i;p_j,q_j) := \left( \begin{array}{cc} 
       \frac{q_j - q_i}{p_j - q_i} & \frac{p_j - q_j}{p_j - q_i} \\
       \frac{p_i - q_i}{p_j - q_i} & \frac{p_j - p_i}{p_j - q_i}
       \end{array} \right)  
\eez
act on the $i$-th and $j$-th slot of a direct sum. Then the Yang-Baxter equation holds. 
We note that $\tilde{R}(p_i,q_i;p_j,q_j) = R(q_i,p_i;q_j,p_j)^\intercal$.

\section{Pure 3-soliton solutions}
\label{sec:3solitons}
For $N=3$ we find 
\be
   \tau &=& \kappa_{11} \kappa_{22} \kappa_{33} \, \beta \, e^{\vartheta_{111}} 
          + \kappa_{11} \kappa_{22} \, \alpha_{12} \, e^{\vartheta_{112}}
          + \kappa_{11} \kappa_{33} \, \alpha_{13} \, e^{\vartheta_{121}}   \nonumber \\
        &&  + \kappa_{22} \kappa_{33} \, \alpha_{23} \, e^{\vartheta_{211}}
          + \kappa_{11} \, e^{\vartheta_{122}}
          + \kappa_{22} \, e^{\vartheta_{212}}
          + \kappa_{33} \, e^{\vartheta_{221}}
          + e^{\vartheta_{222}} \, ,    \label{tau_3s}
\ee
where again $\kappa_{ij} = \eta_i K \xi_j$, and
\bez
    \alpha_{ij} &=& 1- \frac{(p_i-q_i)(p_j-q_j)}{(p_i-q_j)(p_j-q_i)} \, 
                   \frac{\kappa_{ij} \kappa_{ji}}{\kappa_{ii} \kappa_{jj}} \, , \\
    \beta &=& -2 + \alpha_{12} + \alpha_{13} + \alpha_{23} 
       + \frac{(p_1-q_1)(p_2-q_2)(p_3-q_3)}{(p_1-q_3)(p_2-q_1)(p_3-q_2)} \, 
                   \frac{\kappa_{12} \kappa_{23} \kappa_{31}}{\kappa_{11} \kappa_{22} \kappa_{33}} \\
     && + \frac{(p_1-q_1)(p_2-q_2)(p_3-q_3)}{(p_1-q_2)(p_2-q_3)(p_3-q_1)} \, 
                   \frac{\kappa_{13} \kappa_{21} \kappa_{32}}{\kappa_{11} \kappa_{22} \kappa_{33}} \, .
\eez
 Furthermore, we obtain
\be
   F &=& \Big( (p_1-q_1) \, \alpha_{23} \, \kappa_{22} \kappa_{33} \, \xi_1 \otimes \eta_1
       + (p_2-q_2) \, \alpha_{13} \, \kappa_{11} \kappa_{33} \, \xi_2 \otimes \eta_2  \nonumber \\
     &&  + (p_3-q_3) \, \alpha_{12} \, \kappa_{11} \kappa_{22} \, \xi_3 \otimes \eta_3
       + \frac{(p_1-q_1)(p_2-q_2)}{q_1-p_2} \, \alpha_{312} \, \kappa_{12} \kappa_{33} \, \xi_1 \otimes \eta_2 \nonumber \\
     &&  + \frac{(p_1-q_1)(p_3-q_3)}{q_1-p_3} \, \alpha_{213} \, \kappa_{13} \kappa_{22} \, \xi_1 \otimes \eta_3
       + \frac{(p_2-q_2)(p_3-q_3)}{q_2-p_3} \, \alpha_{123} \, \kappa_{11} \kappa_{23} \, \xi_2 \otimes \eta_3 \nonumber \\
     &&  + \frac{(p_1-q_1)(q_2-p_2)}{p_1-q_2} \, \alpha_{321} \, \kappa_{21} \kappa_{33} \, \xi_2 \otimes \eta_1
       + \frac{(p_1-q_1)(q_3-p_3)}{p_1-q_3} \, \alpha_{231} \, \kappa_{22} \kappa_{31} \, \xi_3 \otimes \eta_1 \nonumber \\
     &&  + \frac{(p_2-q_2)(q_3-p_3)}{p_2-q_3} \, \alpha_{132} \, \kappa_{11} \kappa_{32} \, \xi_3 \otimes \eta_2
       \Big) \, e^{\vartheta_{111}}  \nonumber \\     
     &&  + \Big( (p_1-q_1) \, \kappa_{22} \, \xi_1 \otimes \eta_1
         + (p_2-q_2) \, \kappa_{11} \, \xi_2 \otimes \eta_2
         + \frac{(p_1-q_1)(p_2-q_2)}{q_1-p_2} \kappa_{12} \, \xi_1 \otimes \eta_2 \nonumber \\
     &&  - \frac{(p_1-q_1)(p_2-q_2)}{p_1-q_2} \kappa_{21} \, \xi_2 \otimes \eta_1
         \Big) \, e^{\vartheta_{112}}  \nonumber  \\
     &&  + \Big( (p_1-q_1) \, \kappa_{33} \, \xi_1 \otimes \eta_1
         + (p_3-q_3) \, \kappa_{11} \, \xi_3 \otimes \eta_3
         + \frac{(p_1-q_1)(p_3-q_3)}{q_1-p_3} \kappa_{13} \, \xi_1 \otimes \eta_3 \nonumber \\
     &&  - \frac{(p_1-q_1)(p_3-q_3)}{p_1-q_3} \kappa_{31} \, \xi_3 \otimes \eta_1
         \Big) \, e^{\vartheta_{121}} \nonumber \\
     &&  + \Big( (p_2-q_2) \, \kappa_{33} \, \xi_2 \otimes \eta_2
         + (p_3-q_3) \, \kappa_{22} \, \xi_3 \otimes \eta_3
         + \frac{(p_2-q_2)(p_3-q_3)}{q_2-p_3} \kappa_{23} \, \xi_2 \otimes \eta_3 \nonumber \\
     &&  - \frac{(p_2-q_2)(p_3-q_3)}{p_2-q_3} \kappa_{32} \, \xi_3 \otimes \eta_2
         \Big) \, e^{\vartheta_{211}} \nonumber \\
     &&  + (p_1-q_1) \, \xi_1 \otimes \eta_1 \, e^{\vartheta_{122}}  
       + (p_2-q_2) \, \xi_2 \otimes \eta_2 \, e^{\vartheta_{212}}   
       + (p_3-q_3) \, \xi_3 \otimes \eta_3 \, e^{\vartheta_{221}} \, ,   \label{F_3s}
\ee
where
\bez
    \alpha _{kij} = 1 
     - \frac{(q_i-p_j)(q_k-p_k) \kappa_{ik} \, \kappa_{kj}}{(q_i-p_k)(q_k-p_j) \kappa_{ij} \, \kappa_{kk}} \, .
\eez
Note that $\alpha_{ij} = \alpha_{ijj}$. 
Recall that the coefficient of $e^{\vartheta_{abc}}$ in the expression for $\tau$, respectively $F$, has been named $\mu_{abc}$, 
respectively $M_{abc}$. The tropical value in the region where $\vartheta_{abc}$ dominates all other phases is given by
\bez
     \phi_{abc} = \frac{M_{abc}}{\mu_{abc}} \, .
\eez
The corresponding values can be read off from (\ref{tau_3s}) and (\ref{F_3s}).

\subsection{Pure vector KP 3-soliton solutions}
Now we restrict our considerations to the vector case $n=1$. Using
\bez
    \hat{\xi}_i = \frac{\xi_i}{K \xi_i} \, ,
\eez
we obtain
\bez
  && \hat{u}_{122,222} = \hat{\xi}_1 \, , \quad 
     \hat{u}_{212,222} = \hat{\xi}_2 \, , \quad 
     \hat{u}_{221,222} = \hat{\xi}_3 \, , \\
  && \hat{u}_{111,211} = \frac{(p_1-q_2)(p_1-q_3)}{(p_1-p_2)(p_1-p_3)} \, \hat{\xi}_1 
               - \frac{(p_2-q_2)(p_2-q_3)}{(p_1-p_2)(p_2-p_3)} \, \hat{\xi}_2
               + \frac{(p_3-q_2)(p_3-q_3)}{(p_1-p_3)(p_2-p_3)} \, \hat{\xi}_3 \, , \\
  && \hat{u}_{111,121} = \frac{(p_1-q_1)(p_1-q_3)}{(p_1-p_2)(p_1-p_3)} \, \hat{\xi}_1 
               - \frac{(p_2-q_1)(p_2-q_3)}{(p_1-p_2)(p_2-p_3)} \, \hat{\xi}_2
               + \frac{(p_3-q_1)(p_3-q_3)}{(p_2-p_3)(p_1-p_3)} \, \hat{\xi}_3 \, , \\
  && \hat{u}_{111,112} = \frac{(p_1-q_1)(p_1-q_2)}{(p_1-p_2)(p_1-p_3)} \, \hat{\xi}_1 
               - \frac{(p_2-q_1)(p_2-q_2)}{(p_1-p_2)(p_2-p_3)} \, \hat{\xi}_2
               + \frac{(p_3-q_1)(p_3-q_2)}{(p_1-p_3)(p_2-p_3)} \, \hat{\xi}_3 \, , \\               
  && \hat{u}_{121,221} = \frac{p_1-q_3}{p_1-p_3} \, \hat{\xi}_1 - \frac{p_3-q_3}{p_1-p_3} \, \hat{\xi}_3  \, , \quad 
     \hat{u}_{121,122} = \frac{p_1-q_1}{p_1-p_3} \, \hat{\xi}_1 - \frac{p_3-q_1}{p_1-p_3} \, \hat{\xi}_3  \, , \\   
  && \hat{u}_{112,122} = \frac{p_1-q_1}{p_1-p_2} \, \hat{\xi}_1 - \frac{p_2-q_1}{p_1-p_2} \, \hat{\xi}_2  \, , \quad 
     \hat{u}_{211,221} = \frac{p_2-q_3}{p_2-p_3} \, \hat{\xi}_2 - \frac{p_3-q_3}{p_2-p_3} \, \hat{\xi}_3  \, , \\ 
  && \hat{u}_{211,212} = \frac{p_2-q_2}{p_2-p_3} \, \hat{\xi}_2 - \frac{p_3-q_2}{p_2-p_3} \, \hat{\xi}_3  \, , \quad 
     \hat{u}_{112,212} = \frac{p_1-q_2}{p_1-p_2} \, \hat{\xi}_1 - \frac{p_2-q_2}{p_1-p_2} \, \hat{\xi}_2  \, .
\eez
\begin{figure} 
\begin{center}
\includegraphics[scale=.25]{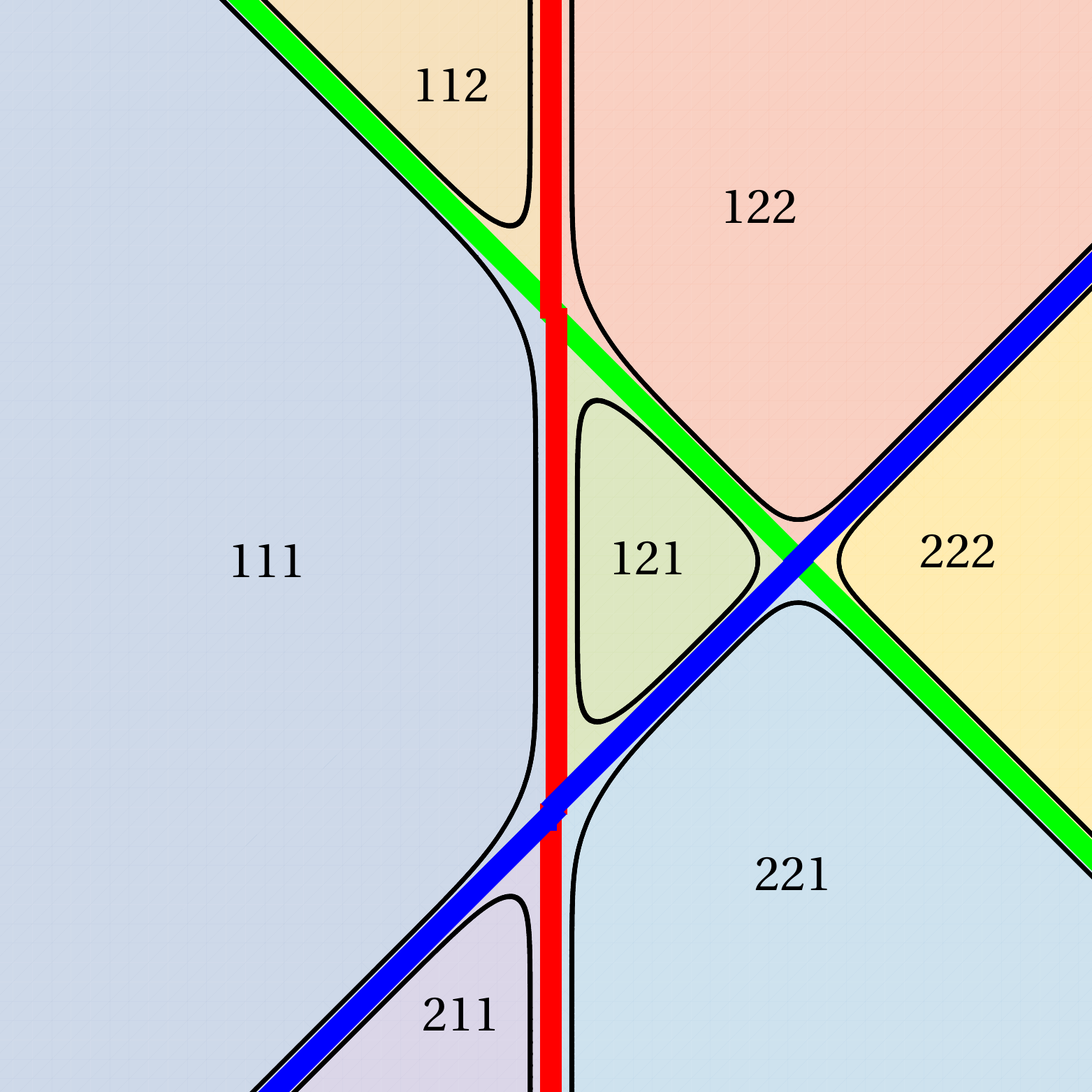} 
\hspace{.5cm}
\includegraphics[scale=.25]{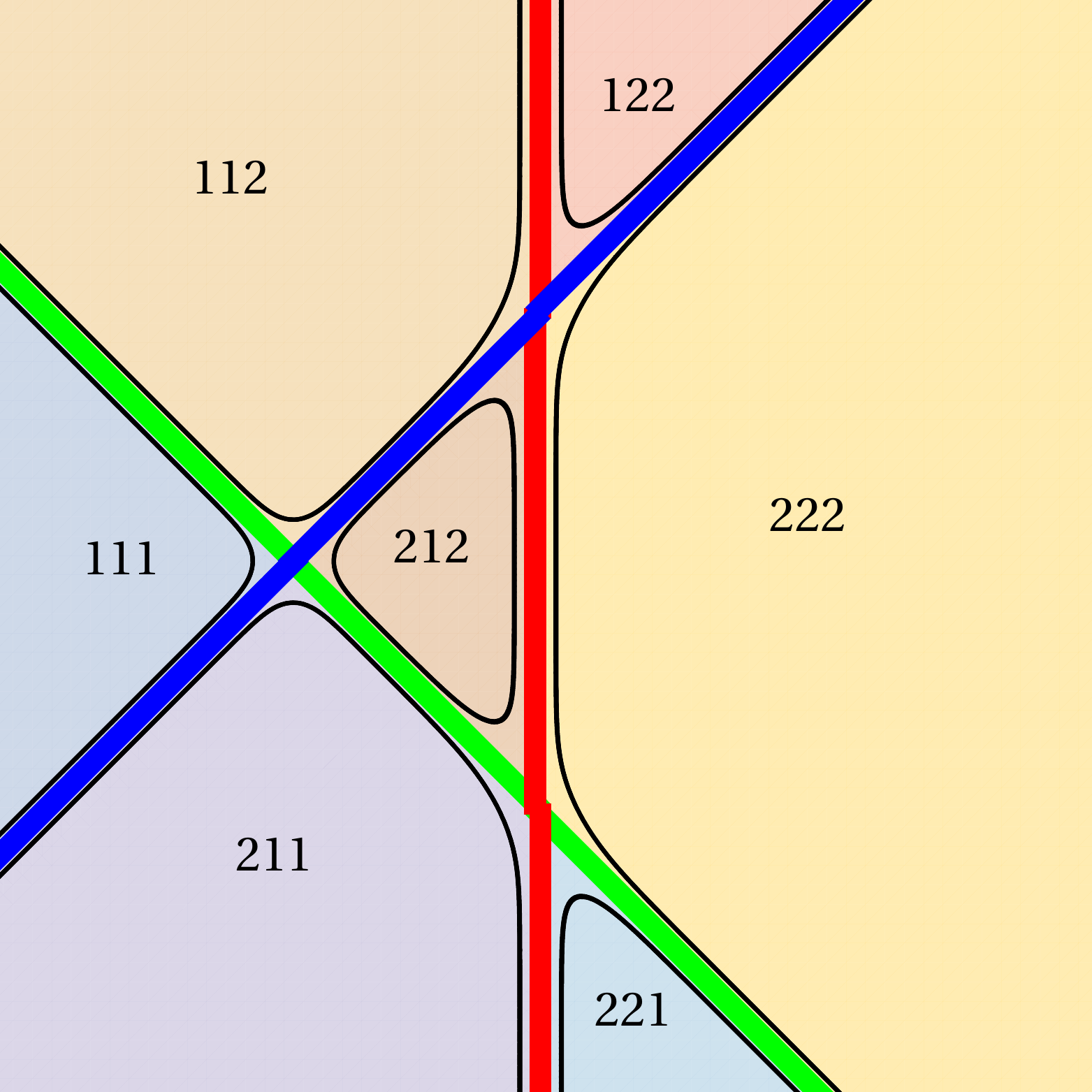} 
\end{center}
\caption{Yang-Baxter relation in terms of vector KP line solitons. These are contour plots in the $xy$-plane 
(horizontal $x$- and vertical $y$-axis) of a 3-soliton solution at negative, respectively positive $t$. 
A number $abc$ indicates the respective dominating phase region.
\label{fig:YB} }
\end{figure}
The contour plots in Fig.~\ref{fig:YB} show the structure at fixed $t$ with $t < 0$ and $t > 0$, respectively.  
The lines extending to the bottom are numbered by $1,2,3$ from left to right (displayed as blue, red, green, respectively). 
Thinking of three particles undergoing a scattering process in $y$-direction, they carry polarizations that change 
at ``crossing points''. As $y$ increases we have
\bez
 &&  (\hat{u}_{111,211},\hat{u}_{211,221},\hat{u}_{221,222}) \stackrel{12}{\mapsto}
   (\hat{u}_{121,221},\hat{u}_{111,121},\hat{u}_{221,222}) \stackrel{13}{\mapsto}
   (\hat{u}_{122,222},\hat{u}_{111,121},\hat{u}_{121,122}) \\
 &&  \stackrel{23}{\mapsto}
   (\hat{u}_{122,222},\hat{u}_{112,122},\hat{u}_{111,112}) 
\eez
for $t<0$, where the $i$-th entry contains the polarization of the $i$-th particle, and 
\bez
 &&  (\hat{u}_{111,211},\hat{u}_{211,221},\hat{u}_{221,222}) \stackrel{23}{\mapsto}
   (\hat{u}_{111,211},\hat{u}_{212,222},\hat{u}_{211,212}) \stackrel{13}{\mapsto}
   (\hat{u}_{112,212},\hat{u}_{212,222},\hat{u}_{111,112}) \\
 &&  \stackrel{12}{\mapsto}
   (\hat{u}_{122,222},\hat{u}_{112,122},\hat{u}_{111,112}) 
\eez
for $t>0$. The numbers $ij$ assigned to the steps refer to the ``particles'' involved.  
In both cases we start and end with the same vectors, and this implies the Yang-Baxter equation for the associated 
transformations. 
Let $V_{a_1a_2a_3,b_1b_2b_3}$ be the column vector formed by the coefficients of $\hat{u}_{a_1a_2a_3,b_1b_2b_3}$ 
with respect to $\hat{\xi}_1,\hat{\xi}_2,\hat{\xi}_3$. The following matrices are composed of these  
column vectors,
\bez
  && U_{123} = \left( \begin{array}{ccc} V_{111,211} & V_{211,221} & V_{221,222} \end{array} \right) \, , \quad
     U_{213} = \left( \begin{array}{ccc} V_{121,221} & V_{111,121} & V_{221,222} \end{array} \right) \, , \\
  && U_{231} = \left( \begin{array}{ccc} V_{122,222} & V_{111,121} & V_{121,122} \end{array} \right) \, , \quad
     U_{321} = \left( \begin{array}{ccc} V_{122,222} & V_{112,122} & V_{111,112} \end{array} \right) \, , \\
  && U_{132} = \left( \begin{array}{ccc} V_{111,211} & V_{212,222} & V_{211,212} \end{array} \right) \, , \quad
     U_{312} = \left( \begin{array}{ccc} V_{112,212} & V_{212,222} & V_{111,112} \end{array} \right) \, . 
\eez 
They represent the triplets of polarizations constituting the above chains. Next we define matrices
\bez
    r_{12} = U_{123}^{-1} U_{213} = U_{312}^{-1} U_{321} \, , \quad
    r_{13} = U_{213}^{-1} U_{231} = U_{132}^{-1} U_{312} \, , \quad
    r_{23} = U_{231}^{-1} U_{321} = U_{123}^{-1} U_{132} \, , 
\eez
which turn out to be given in terms of the $R$-matrix (\ref{Rmatrix}).
For example, 
\bez
     r_{13} = \left(
   \begin{array}{ccc}
     \frac{p_1-p_3}{p_1-q_3} & 0 & \frac{p_1-q_1}{p_1-q_3} \\
                  0 & 1 & 0 \\
     \frac{p_3-q_3}{p_1-q_3} & 0 & \frac{q_1-q_3}{p_1-q_3} 
    \end{array}\right) \, .
\eez
The Yang-Baxter equation reads
\bez
    r_{12} \, r_{13} \, r_{23} = r_{23} \, r_{13} \,  r_{12} \, .
\eez
Fig.~\ref{fig:YB_Ku} shows plots of $K u$ for a choice of the parameters.  
\begin{figure} 
\begin{center}
\includegraphics[scale=.25]{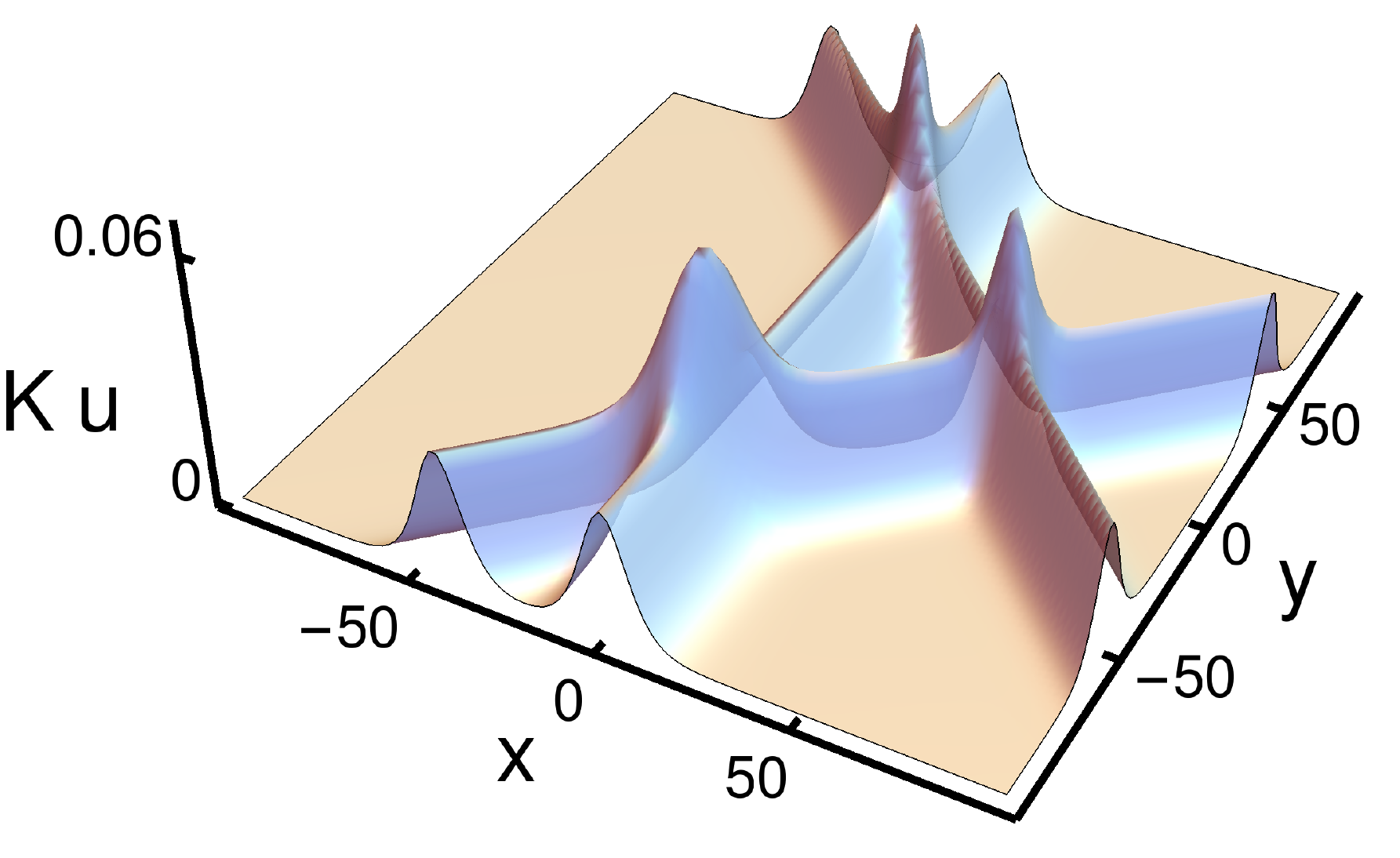} 
\hspace{.5cm}
\includegraphics[scale=.25]{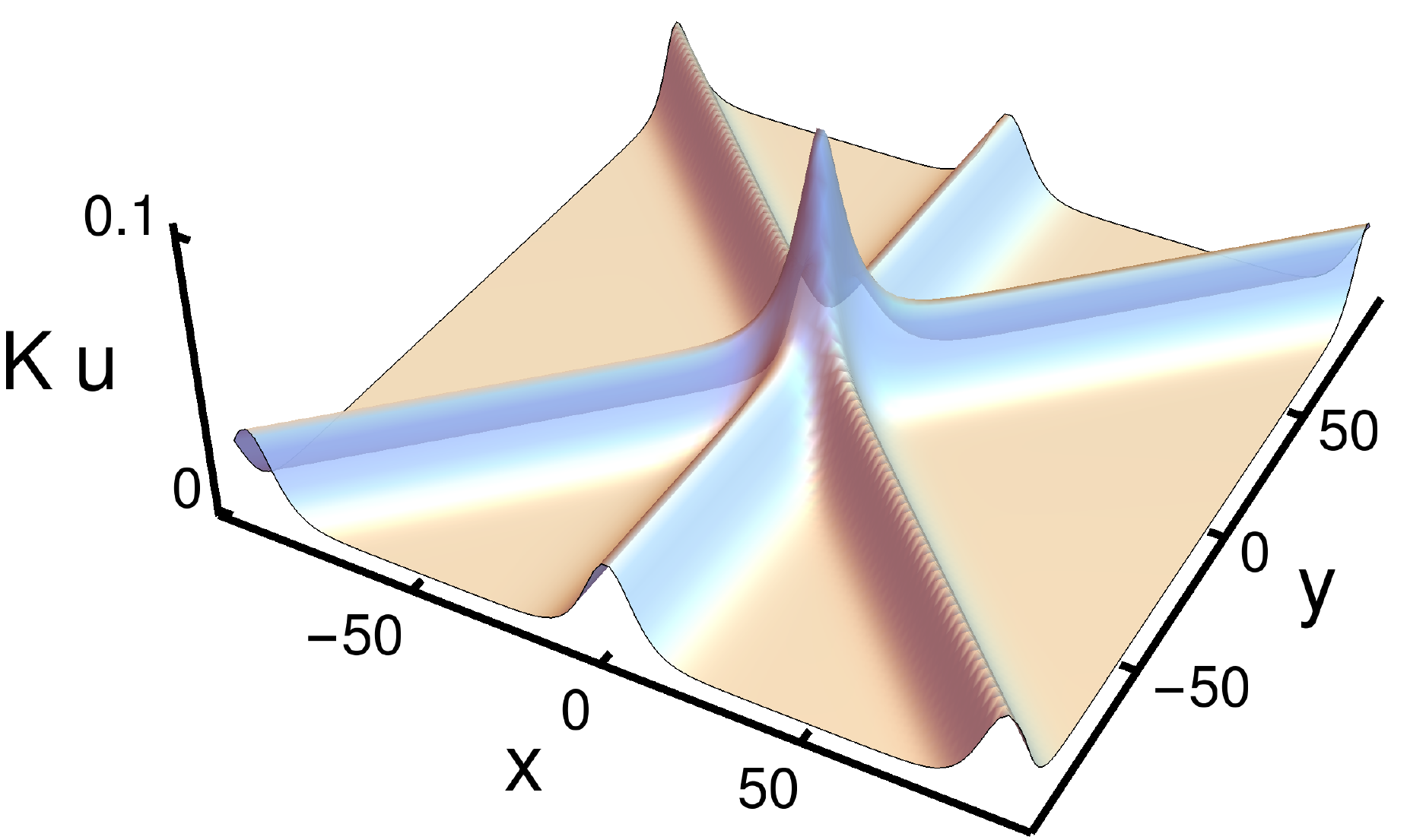} 
\hspace{.5cm}
\includegraphics[scale=.25]{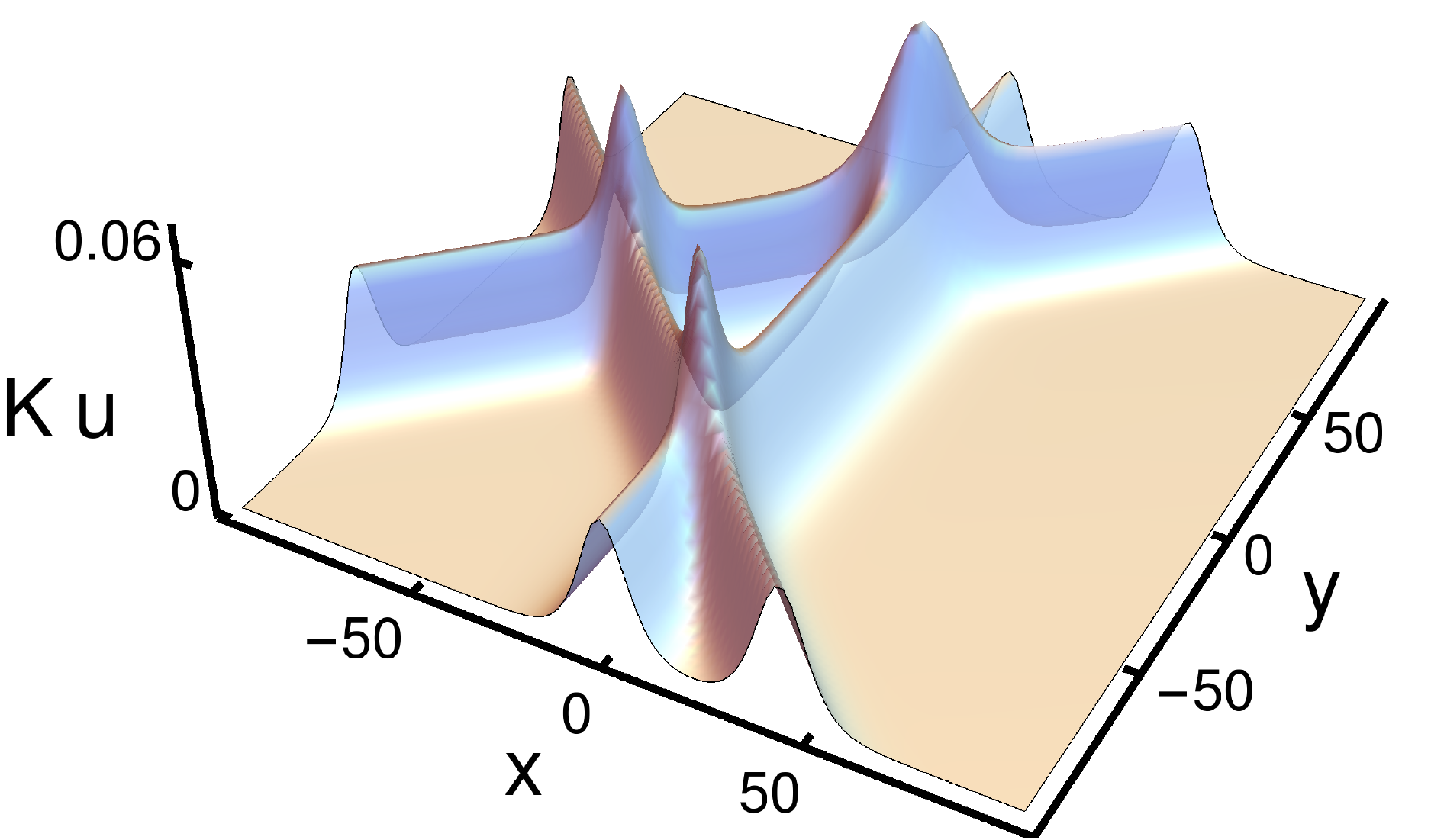} 
\end{center}
\caption{Plots of the scalar $K u$ for a ``Yang-Baxter line soliton configuration'' of a vector KP equation 
at times $t<0$, $t=0$ and $t>0$.
\label{fig:YB_Ku} }
\end{figure}

\subsection{Vector KdV 3-soliton solutions}
\label{subsec:vectorKdV}
We impose the KdV reduction, see Remark~\ref{rem:KdV_reduction}, and replace (\ref{vartheta}) by
\bez
      \vartheta(P) = x \, P + t \, P^3 + s \, P^5 \, .
\eez
The additional last term introduces the next evolution variable $s$ of the KdV hierarchy, also 
see Remark~\ref{rem:hierarchy}.
\begin{figure}[!htbp] 
\begin{center}
\includegraphics[scale=.25]{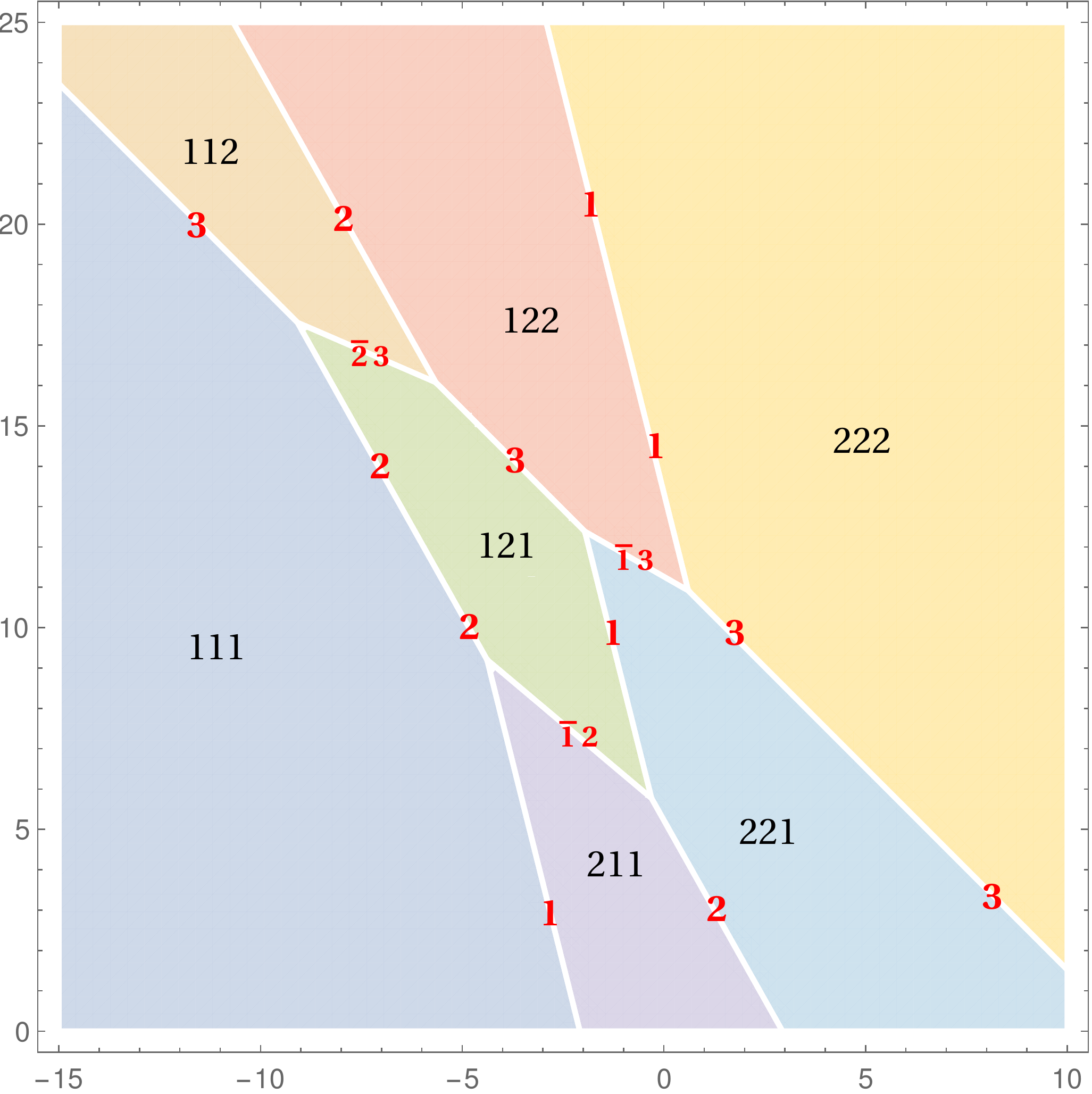} 
\includegraphics[scale=.233]{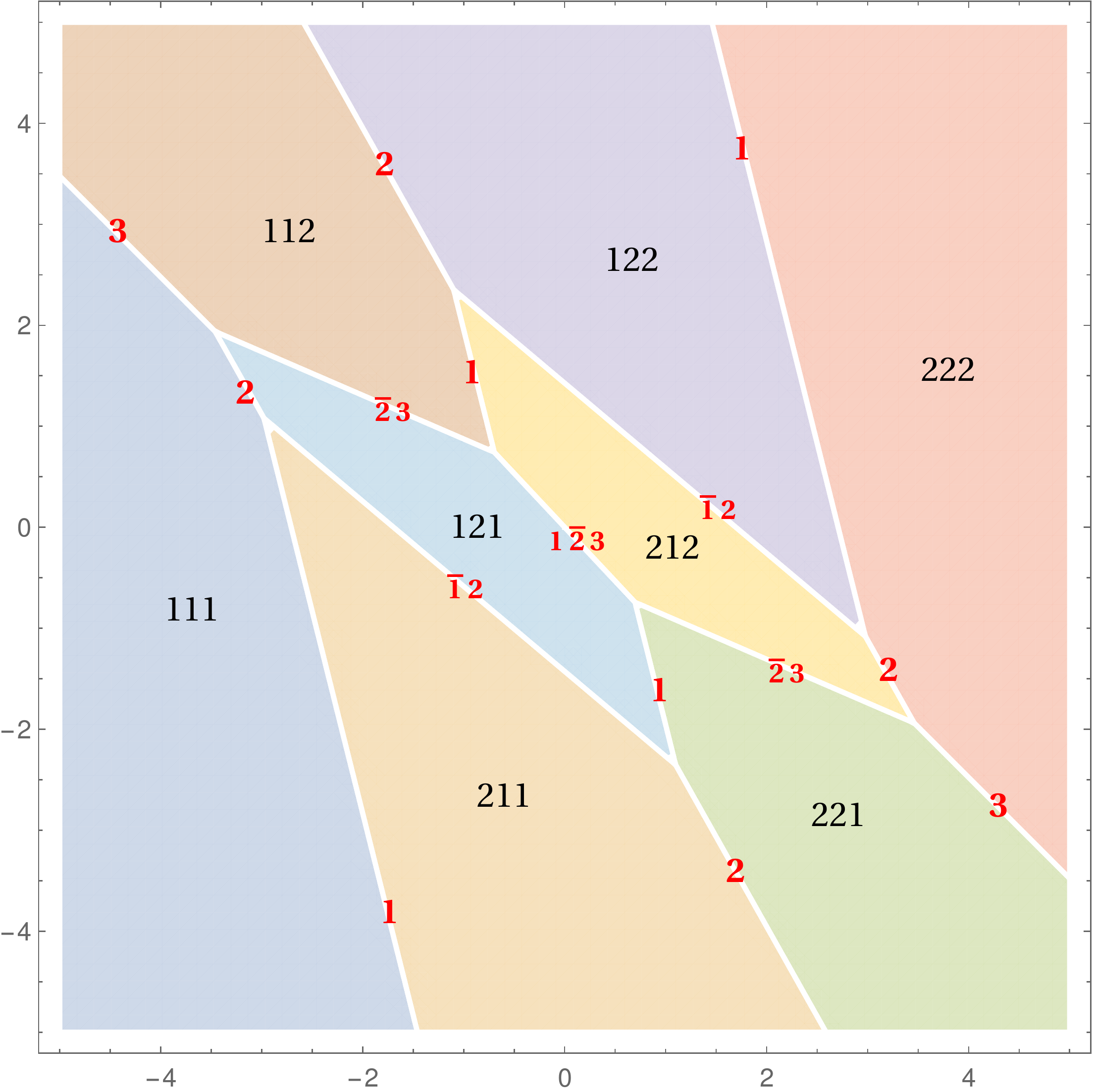} 
\includegraphics[scale=.25]{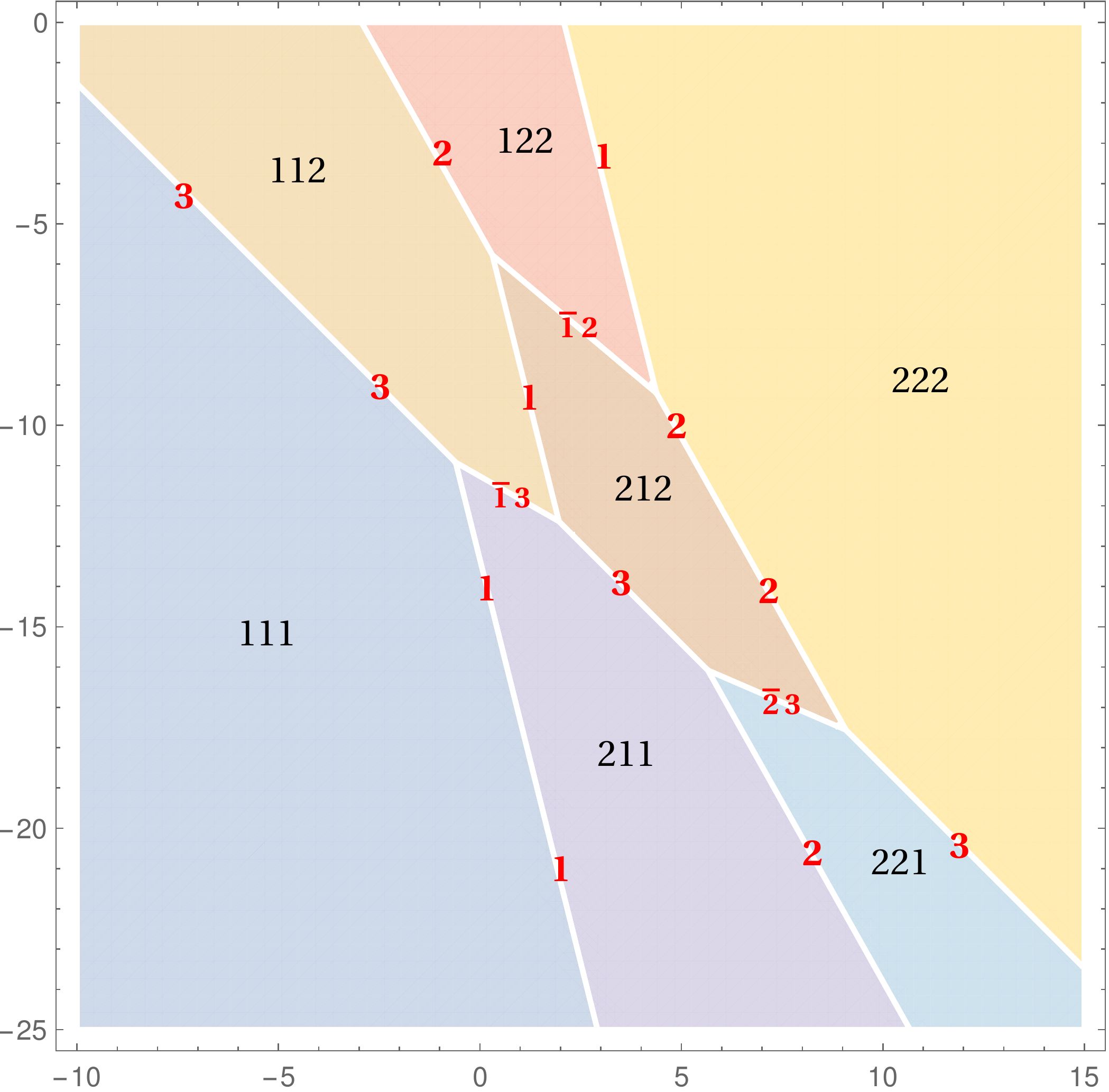} 
\end{center}
\caption{Tropical limit graph and dominating phase regions of a vector KdV solution in two-dimensional space-time 
($x$ horizontal, $t$ vertical), at $s=-10$, $s=0$ and $s=10$. See Section~\ref{subsec:vectorKdV}.
Numbers $1,2,3$ (in red) attached to lines identify appearances of the respective soliton. Here bounded lines 
are formally associated with a pair of a (virtual) anti-soliton, indicated by a bar over the respective number,  
and a (virtual) soliton. At $s=0$, a ``composite'' of three virtual solitons ($1\bar{2}3$) shows up. 
\label{fig:vectorKdV} }
\end{figure}
Let us consider, for simplicity, the KdV$_K$ equation with $m=3$ and $K = (1,1,1)$, and the 
special solution with parameters 
\bez
    \theta_1 = I_3 \, ,  \qquad 
    \chi_1 = (1,-1,1)^\intercal \, .
\eez
The tropical limit graph is displayed in Fig.~\ref{fig:vectorKdV} for $p_1=1/2$, $p_2=3/4$, $p_3=1$, and 
different values of $s$. We have the matrix 
\bez
 U_{123} = \left( \begin{array}{ccc} \hat{u}_{111,211} & \hat{u}_{211,221} & \hat{u}_{221,222} \end{array} \right) 
         = \left(
   \begin{array}{ccc}
   \frac{(p_1+p_2)(p_1+p_3)}{(p_1-p_2)(p_1-p_3)} & 0 & 0  \\
  -\frac{2 p_2 (p_2+p_3)}{(p_1-p_2)(p_2-p_3)} & \frac{p_2+p_3}{p_2-p_3} & 0 \\
   \frac{2 p_3 (p_2+p_3)}{(p_1-p_3)(p_2-p_3)} & -\frac{2 p_3}{p_2-p_3} & 1 
   \end{array} \right) 
\eez
of initial polarizations. The next values $\hat{u}_{abc,def}$ are then obtained by application of the $R$-matrix 
(\ref{R_KdV}) from the right, and so forth, following either the left or the right graph in Fig.~\ref{fig:vectorKdV}
in upwards (i.e., $t$-) direction. Since the initial and the final polarizations are the same, the Yang-Baxter equation holds. 
Here the $R$-matrix describes the time evolution of polarizations in the tropical limit. 

Fig.~\ref{fig:vectorKdV} suggests to think of $R$-matrices as being associated with bounded lines, which may be thought 
of as representing ``virtual solitons''. Interaction of two solitons then means exchange of a virtual soliton.  

At $s=0$, see the plot in the middle of Fig.~\ref{fig:vectorKdV}, something peculiar occurs, namely a sort of 
three-particle interaction. This is a degenerate special case to which the Yang-Baxter description does not apply. 
But this is not relevant for our conclusions.

\section{Tropical limit of pure vector solitons and the $R$-matrix}
\label{sec:vector_case_and_R-matrix}
We set $n=1$ (vector case). The following results describe what happens at a ``crossing'' of two solitons, numbered 
by $i$ and $j$, depicted as a contour plot in Fig.~\ref{fig:crossing}.

\begin{lemma}
\be
   (p_i - p_j) \, \phi_{I_{ij}(1,1)} + (q_i - q_j) \, \phi_{I_{ij}(2,2)}
 = (p_i - q_j) \, \phi_{I_{ij}(1,2)} + (q_i - p_j) \, \phi_{I_{ij}(2,1)} \, .   \label{phi-identity_at_crossing}
\ee
\end{lemma}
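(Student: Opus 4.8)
The plan is to reduce the claim to an explicit closed form for the tropical values $\phi_I$ in the vector case, and then to a pair of elementary polynomial identities in the spectral parameters.

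\emph{Step 1 (explicit tropical values).} With $\phi_0=0$, $\Omega_0=I_N$ and $n=1$, the soliton solution collapses to $\phi=-\,\theta_1\,(\mathcal{E}^{-1}+W)^{-1}\chi_1$ with $\mathcal{E}:=\mathrm{diag}\big(e^{\vartheta(p_k)-\vartheta(q_k)}\big)$: indeed $\phi=-\theta\Omega^{-1}\chi$ with $\theta=\theta_1 e^{\vartheta(P_1)}$, $\chi=e^{-\vartheta(Q_1)}\chi_1$, $\Omega=I_N+e^{-\vartheta(Q_1)}W e^{\vartheta(P_1)}$, and $B(I_N+AWB)^{-1}A=(A^{-1}B^{-1}+W)^{-1}$ for diagonal $A,B$. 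Since each difference $\vartheta_{I(S)}-\vartheta_{I(T)}$ is a $\pm1$-combination of the $\vartheta(p_k)-\vartheta(q_k)$, the limit defining $\phi_{I(S)}$ in (\ref{phi_I}) — where $S=\{k:a_k=1\}$ — is realized by letting $\mathcal{E}^{-1}_{kk}\to0$ for $k\in S$ and $\mathcal{E}^{-1}_{kk}\to\infty$ for $k\notin S$. Writing $(\mathcal{E}^{-1}+W)^{-1}$ in $S$-versus-complement block form and applying the Schur complement, the limit has $(W_{SS})^{-1}$ in the $S$-block and zero elsewhere, so $\phi_{I(S)}=-(\theta_1)_S(W_{SS})^{-1}(\chi_1)_S$. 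Finally $w_{kl}=\tfrac{q_l-p_l}{q_k-p_l}\,\eta_k K\xi_l$ and $(\theta_1)_l=(q_l-p_l)\xi_l$, $(\chi_1)_k=\eta_k$, so $W$ equals the Cauchy matrix $\big(\tfrac{1}{q_k-p_l}\big)$ conjugated by diagonal matrices carrying the scalars $\eta_k$, $q_l-p_l$ and $K\xi_l$, and these all cancel in $\phi_{I(S)}$, leaving
\[ \phi_{I(S)}=\sum_{l\in S}v(S)_l\,\hat{\xi}_l,\qquad \hat{\xi}_l:=\frac{\xi_l}{K\xi_l}, \]
where $(v(S)_l)_{l\in S}$ solves the Cauchy system $\sum_{l\in S}\tfrac{v(S)_l}{p_l-q_k}=1$ for all $k\in S$; by partial fractions, $v(S)_l=\prod_{k\in S}(p_l-q_k)\big/\prod_{l'\in S\setminus\{l\}}(p_l-p_{l'})$.

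\emph{Step 2 (reduction to scalar identities).} Fix $I$ and $i\neq j$ and set $S:=\{k\neq i,j:a_k=1\}$, so that $I_{ij}(1,1),\,I_{ij}(2,2),\,I_{ij}(1,2),\,I_{ij}(2,1)$ correspond to the index sets $S\cup\{i,j\},\ S,\ S\cup\{i\},\ S\cup\{j\}$. Inserting the formula of Step 1, (\ref{phi-identity_at_crossing}) becomes an equality of linear combinations of the $\hat{\xi}_l$, $l\in S\cup\{i,j\}$; since the difference of its two sides is $\sum_l(\text{rational function of the }p,q)\,\hat{\xi}_l$, it suffices to show each coefficient vanishes identically. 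For $l\in S$, with $A_l:=\prod_{k\in S}(p_l-q_k)\big/\prod_{l'\in S\setminus\{l\}}(p_l-p_{l'})$, the four coefficients are $A_l\tfrac{(p_l-q_i)(p_l-q_j)}{(p_l-p_i)(p_l-p_j)}$, $A_l$, $A_l\tfrac{p_l-q_i}{p_l-p_i}$ and $A_l\tfrac{p_l-q_j}{p_l-p_j}$, and the required equality reduces (clearing $A_l$ and $(p_l-p_i)(p_l-p_j)$) to
\[ (p_i-p_j)(p_l-q_i)(p_l-q_j)+(q_i-q_j)(p_l-p_i)(p_l-p_j)=(p_i-q_j)(p_l-q_i)(p_l-p_j)+(q_i-p_j)(p_l-q_j)(p_l-p_i), \]
which holds by comparing the coefficients of $p_l^2$, $p_l$ and $1$. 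For $l=i$ — and symmetrically $l=j$, using the $i\leftrightarrow j$ symmetry of (\ref{phi-identity_at_crossing}) (which exchanges $I_{ij}(1,2)$ with $I_{ij}(2,1)$ and negates all four coefficients) — the $\phi_{I(S)}$- and $\phi_{I(S\cup\{j\})}$-terms carry no $\hat{\xi}_i$, and the equality collapses to $(p_i-p_j)\,v(S\cup\{i,j\})_i=(p_i-q_j)\,v(S\cup\{i\})_i$; both sides equal $(p_i-q_i)(p_i-q_j)\prod_{k\in S}(p_i-q_k)\big/\prod_{l'\in S}(p_i-p_{l'})$.

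\emph{Where the difficulty lies.} Everything of substance is Step 1, and there the one real observation is that in the vector case $W$ is a Cauchy matrix up to diagonal rescalings, which makes the tropical values completely explicit; the rest is mechanical. A useful consistency check on the statement itself: applying $K$ to (\ref{phi-identity_at_crossing}) and using $\mathrm{tr}(K\phi_I)=p_I-p_{\boldsymbol{2}}$ (equation (\ref{tr(K phi_I)})), both sides reduce to the same affine function of the $p_{k,a_k}$; thus (\ref{phi-identity_at_crossing}) is, modulo that one linear relation, a statement about the $\ker K$-parts of the $\phi_I$ — exactly what Step 1 pins down.
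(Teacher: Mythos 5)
Your proof is correct, and it takes a genuinely different route from the paper's. The paper argues locally: the identity is checked by hand for $N=2$ (using the explicit $\phi_{ab}$ of Section~\ref{sec:2solitons}), and the general case is reduced to $N=2$ by observing that at a crossing the four phases $\vartheta(p_{i,a}),\vartheta(p_{j,a})$ dominate, so all other exponentials vanish in the tropical limit. You instead prove the identity globally, by first deriving a closed form for \emph{every} tropical value: writing $\phi=-\theta_1(\mathcal{E}^{-1}+W)^{-1}\chi_1$ and taking the block limit gives $\phi_{I(S)}=-(\theta_1)_S(W_{SS})^{-1}(\chi_1)_S$, and the observation that for $n=1$ the matrix $W$ is a Cauchy matrix up to diagonal factors yields $\phi_{I(S)}=\sum_{l\in S}\bigl(\prod_{k\in S}(p_l-q_k)/\prod_{l'\in S\setminus\{l\}}(p_l-p_{l'})\bigr)\hat{\xi}_l$; the lemma then reduces to elementary polynomial identities in the $p$'s and $q$'s, which you verify correctly (I checked the $l\in S$ identity coefficientwise and the $l=i,j$ cases). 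What your route buys is a self-contained algebraic proof that does not lean on the informal dominance argument, plus the explicit formula for $\phi_{I(S)}$ as a byproduct --- it reproduces the $\hat{u}$ tables of Section~\ref{sec:3solitons} and makes the reconstruction of Section~\ref{sec:recon} immediate; what it costs is length. One small point worth making explicit: Step~1 needs $W_{SS}$ invertible, which is exactly the standing regularity assumption $\mu_{I(S)}=\det W_{SS}>0$ imposed on pure soliton solutions in Section~\ref{sec:pure}, and (for the Cauchy factorization) $\eta_k\neq 0$, $K\xi_l\neq 0$, which that assumption also enforces.
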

\begin{proof}
This is quickly verified for the 2-soliton solution ($N=2$). But at a crossing a general solution $\phi$ is equivalent  
to a 2-soliton solution, since there the four elementary phases $\vartheta(p_{i,a}),\vartheta(p_{j,a})$, $a=1,2$, 
dominate all others, hence the exponential of any other phase vanishes in the tropical limit.   
\end{proof}

\begin{remark}
(\ref{phi-identity_at_crossing}) can be regarded as a vector version of a scalar linear quadrilaterial equation, 
satisfying ``consistency on a cube'', see (15) in \cite{Atki09}. 
Such a linear relation does \emph{not} hold in the matrix case where $m,n>1$. But (\ref{nonlin_phi_eq}), with $\phi_{ab}$ 
replaced by $\phi_{I_{ij}(a,b)}$ (in which case we have $\phi_{I_{ij}(2,2)} \neq 0$, in general) is a nonlinear 
counterpart of (\ref{phi-identity_at_crossing}). The latter can be deduced from it for $n=1$ by using (\ref{tr(K phi_I)}). 
\hfill $\square$
\end{remark}

\begin{figure} 
\begin{center}
\includegraphics[scale=.3]{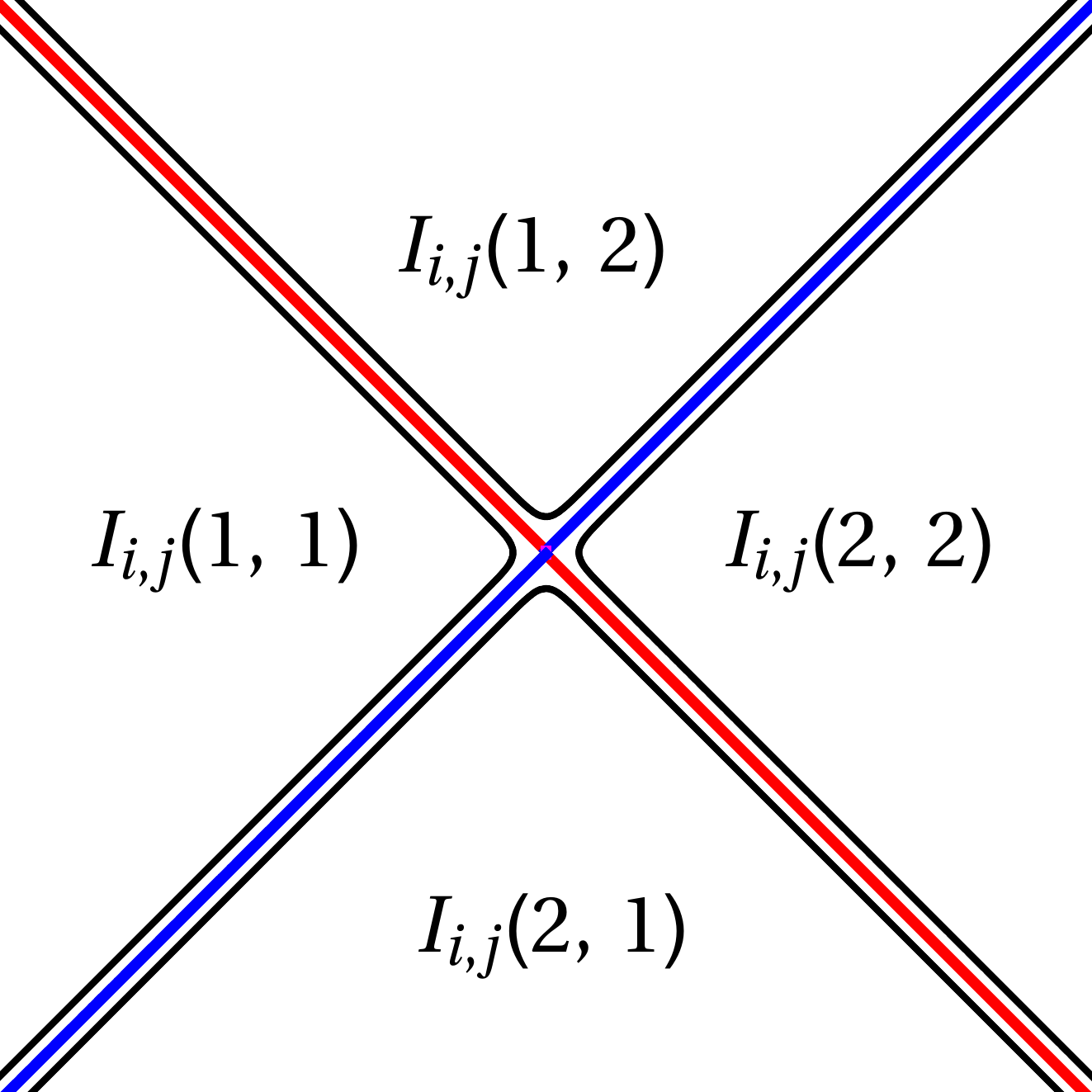} 
\end{center}
\caption{A ``crossing'' of solitons with numbers $i$ and $j$ at fixed time in the $xy$-plane, and the numbers 
of the four phases that are involved.
\label{fig:crossing} }
\end{figure}

\begin{theorem}
\be
    \left( \begin{array}{cc} \hat{u}_{I_{ij}(1,1),I_{ij}(2,1)} & \hat{u}_{I_{ij}(2,1),I_{ij}(2,2)} \end{array} \right) 
    R(p_i,q_i;p_j,q_j) 
    = \left( \begin{array}{cc} \hat{u}_{I_{ij}(1,2),I_{ij}(2,2)} & \hat{u}_{I_{ij}(1,1),I_{ij}(1,2)} \end{array} \right)  
      \label{vector_case_map_R-matrix}
\ee
with
\bez
 R(p_i,q_i;p_j,q_j) = \left(\begin{array}{cc}
 \frac{p_i - p_j}{p_i-q_j} & \frac{p_i-q_i}{p_i-q_j} \\
 \frac{p_j-q_j}{p_i-q_j} & \frac{q_i-q_j}{p_i-q_j} 
 \end{array} \right) \, .
\eez       
\end{theorem}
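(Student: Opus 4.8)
The plan is to reduce the claimed matrix identity, entry by entry, to the single linear relation (\ref{phi-identity_at_crossing}) just established in the Lemma. Write $\phi_{ab} := \phi_{I_{ij}(a,b)}$ for $a,b\in\{1,2\}$; as in the proof of the Lemma, only these four phases are relevant near the crossing. Since $I_{ij}(a,b)$ and $I_{ij}(a',b')$ differ only in their $i$-th and $j$-th entries, and $p_{k,1}=p_k$, $p_{k,2}=q_k$, the sums $p_{I_{ij}(a,b)}$ differ only through the $i$-th and $j$-th slots, so that
\[
   p_{I_{ij}(1,1)}-p_{I_{ij}(2,1)} = p_{I_{ij}(1,2)}-p_{I_{ij}(2,2)} = p_i-q_i \, , \qquad
   p_{I_{ij}(1,1)}-p_{I_{ij}(1,2)} = p_{I_{ij}(2,1)}-p_{I_{ij}(2,2)} = p_j-q_j \, .
\]
By the definition (\ref{hatu_IJ}), the four normalized values occurring in the theorem are then the four ``edge differences'' $d_1 := \phi_{11}-\phi_{21}$, $d_2 := \phi_{21}-\phi_{22}$, $d_3 := \phi_{12}-\phi_{22}$, $d_4 := \phi_{11}-\phi_{12}$, divided by $p_i-q_i$ or $p_j-q_j$ as appropriate: namely $\hat{u}_{I_{ij}(1,1),I_{ij}(2,1)}=d_1/(p_i-q_i)$, $\hat{u}_{I_{ij}(2,1),I_{ij}(2,2)}=d_2/(p_j-q_j)$, $\hat{u}_{I_{ij}(1,2),I_{ij}(2,2)}=d_3/(p_i-q_i)$, $\hat{u}_{I_{ij}(1,1),I_{ij}(1,2)}=d_4/(p_j-q_j)$.

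Next I would extract two linear relations among $d_1,\dots,d_4$. One is the tautology $d_1+d_2=d_3+d_4$, both sides being $\phi_{11}-\phi_{22}$. The other comes from the Lemma: substituting $\phi_{11}=\phi_{22}+d_1+d_2$, $\phi_{21}=\phi_{22}+d_2$, $\phi_{12}=\phi_{22}+d_3$ into (\ref{phi-identity_at_crossing}), the $\phi_{22}$-terms cancel because $(p_i-p_j)+(q_i-q_j)=(p_i-q_j)+(q_i-p_j)$, leaving $(p_i-p_j)(d_1+d_2)=(p_i-q_j)\,d_3+(q_i-p_j)\,d_2$, i.e. $(p_i-q_j)\,d_3=(p_i-p_j)\,d_1+(p_i-q_i)\,d_2$; combining this with $d_4=d_1+d_2-d_3$ then gives $(p_i-q_j)\,d_4=(p_j-q_j)\,d_1+(q_i-q_j)\,d_2$. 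Re-expressing the $d$'s through the $\hat u$'s, the common factors $p_i-q_i$ and $p_j-q_j$ cancel out of both relations, and what remains is precisely the two scalar equations produced by multiplying out the row-vector--matrix product on the left-hand side of (\ref{vector_case_map_R-matrix}) with the stated $R(p_i,q_i;p_j,q_j)$.

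I do not expect a genuine obstacle: the whole content is the small algebraic fact that the inhomogeneous $\phi_{22}$-terms drop out, so that the Lemma descends to a relation purely between edge differences, together with the clean cancellation of the normalization denominators $p_i-q_i$ and $p_j-q_j$, which is exactly what produces the entries of (\ref{Rmatrix}) rather than some rescaled matrix. The only point that requires a little care is the sign bookkeeping when solving the $2\times 2$ system for $(d_3,d_4)$ in terms of $(d_1,d_2)$. As an alternative route one could note that, in the tropical limit, the data at the crossing coincide with those of a $2$-soliton solution in the solitons $i$ and $j$, and then simply invoke the column-vector $2$-soliton computation of Section~\ref{sec:2solitons}; the self-contained argument sketched above avoids that detour.
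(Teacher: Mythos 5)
Your proposal is correct and follows essentially the same route as the paper: the paper's proof simply states that the two column relations of (\ref{vector_case_map_R-matrix}) can be verified directly from the definition (\ref{hatu_IJ}) as consequences of the Lemma (\ref{phi-identity_at_crossing}), and your argument is exactly that verification, spelled out via the edge differences $d_1,\dots,d_4$ and the cancellation of the normalization denominators. The algebra checks out, so no further comment is needed.
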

\begin{proof}
Using (\ref{hatu_IJ}) we can directly verify that the following relations hold as a consequence of (\ref{phi-identity_at_crossing}),
\bez
  &&  \frac{p_i-p_j}{p_i-q_j} \hat{u}_{I_{ij}(1,1),I_{ij}(2,1)} + \frac{p_j-q_j}{p_i-q_j} \hat{u}_{I_{ij}(2,1),I_{ij}(2,2)}
    = \hat{u}_{I_{ij}(1,2),I_{ij}(2,2)} \, , \\
  &&  \frac{p_i-q_i}{p_i-q_j} \hat{u}_{I_{ij}(1,1),I_{ij}(2,1)} + \frac{q_i-q_j}{p_i-q_j} \hat{u}_{I_{ij}(2,1),I_{ij}(2,2)}
    = \hat{u}_{I_{ij}(1,1),I_{ij}(1,2)} \, .
\eez
In matrix form, this is (\ref{vector_case_map_R-matrix}). 
\end{proof}

We already know that $R(p_i,q_i;p_j,q_j)$ satisfies the Yang-Baxter equation.

\section{Construction of pure vector KP soliton solutions from a scalar KP solution and the $R$-matrix}
\label{sec:recon}
Given a $\tau$-function for a pure $N$-soliton solution of the \emph{scalar} KP equation, the Yang-Baxter 
$R$-matrix found above can be used to construct a pure $N$-soliton solution of the vector KP equation.
We will explain this for the case $N=3$.

The $\tau$-function of the pure 3-soliton solution of the scalar KP-II equation is given by
\bez
    \tau = \sum_{a,b,c=1}^2 \Delta_{abc} \, e^{\vartheta_{abc}} \, , \quad
    \Delta_{abc} = (p_{2,b}-p_{1,a})(p_{3,c}-p_{1,a})(p_{3,c}-p_{2,b}) \, ,
\eez
as obtained by the Wronskian method (see, e.g., \cite{Hiro04}). Comparison with (\ref{tau_pure_expansion}) shows that $\mu_I = \Delta_I$. 
\begin{figure} 
\begin{center}
\includegraphics[scale=.28]{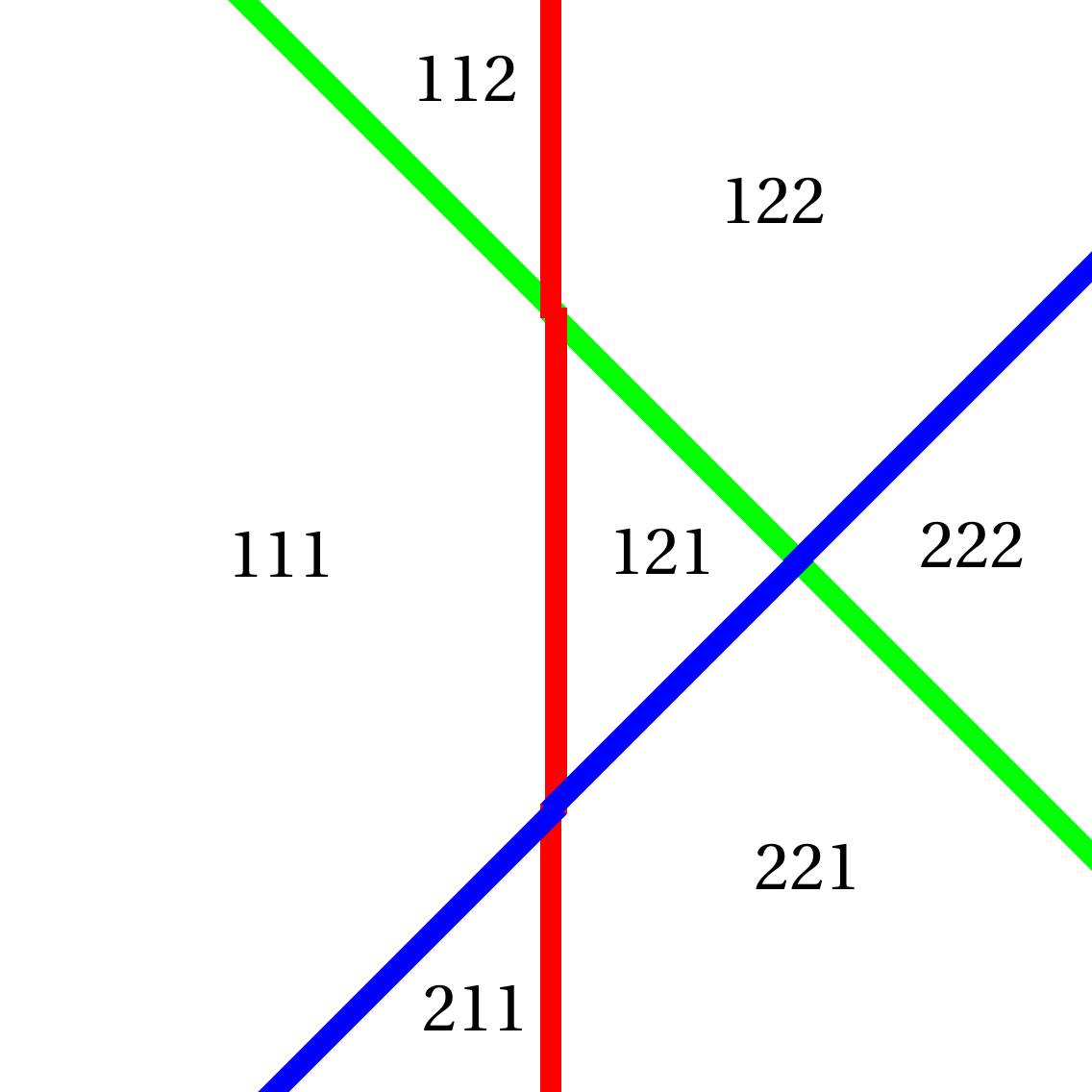} 
\hspace{2cm}
\includegraphics[scale=.28]{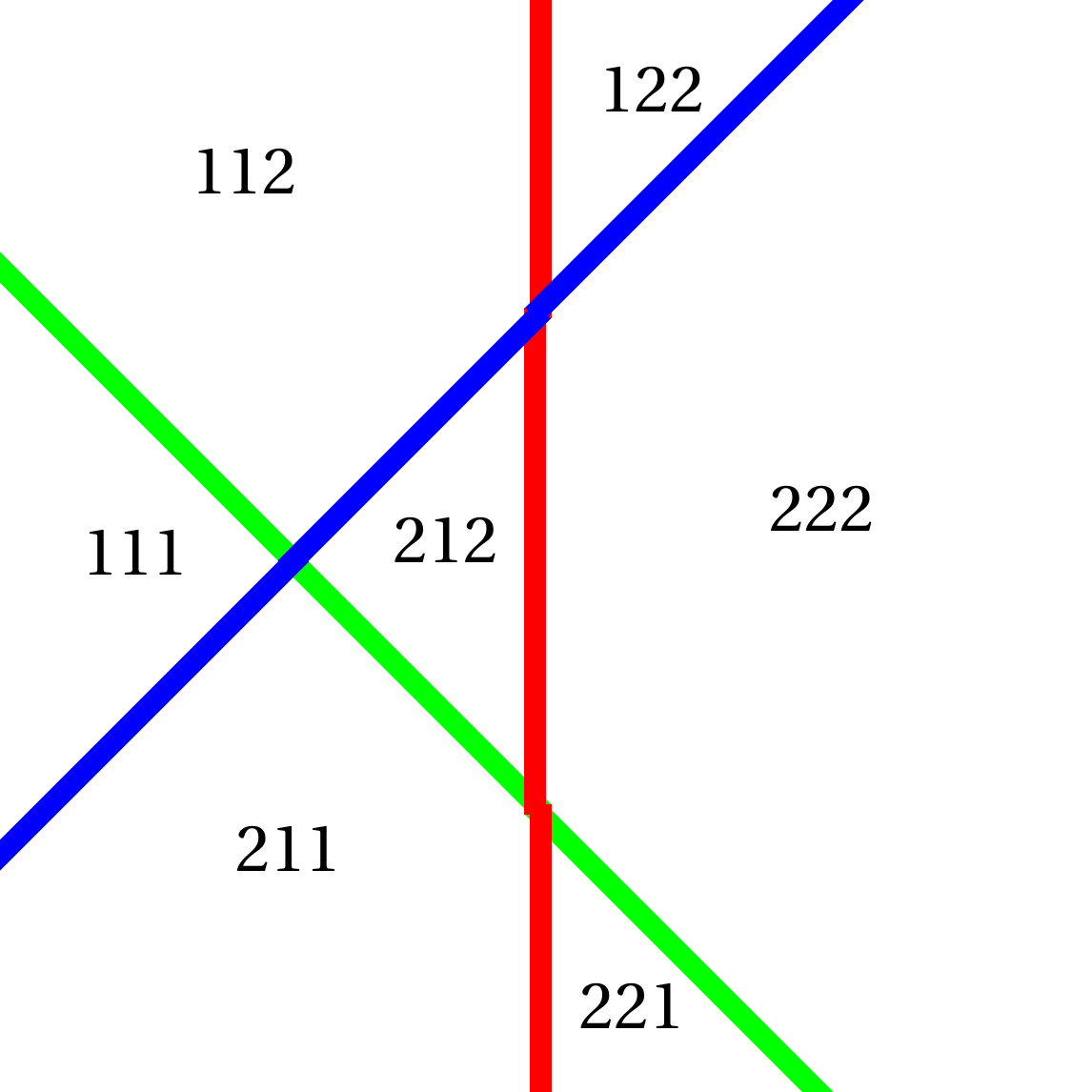} 
\end{center}
\caption{Three-soliton configuration for $t<0$ and $t>0$, respectively, in the $xy$-plane. Numbers specify  
dominating phase regions. 
\label{fig:YB_for_reconstr} }
\end{figure}
Starting at the bottom of both graphs in Fig.~\ref{fig:YB_for_reconstr}, we associate a column vector 
$\xi_i$ with the $i$-th soliton (counted from left to right) and normalize it such that $K \xi_i=1$. 
Accordingly, we set
\bez
    \hat{u}_{111,211} = \xi_1 \, , \quad
    \hat{u}_{211,221} = \xi_2 \, , \quad
    \hat{u}_{221,222} = \xi_3 \, .
\eez
By consecutive application of the $R$-matrix (\ref{Rmatrix}), we find the polarizations on the further 
line segments, proceeding in $y$-direction. There are different ways to proceed, but they are consistent since 
$R$ satisfies the Yang-Baxter equation. For example,
\bez
 \left(\begin{array}{cc}
 \hat{u}_{121,221} & \hat{u}_{111,121} 
       \end{array}\right)
 =\left( \begin{array}{cc}
 \hat{u}_{111,211} & \hat{u}_{211,221} 
         \end{array}\right) \, R(p_1,q_1;p_2,q_2) \, .
\eez
This leads to
\bez
 \hat{u}_{121,221} &=& \frac{p_1-p_2}{p_1-q_2} \xi_1
                     + \frac{p_2-q_2}{p_1-q_2} \xi_2 \, , \qquad
 \hat{u}_{111,121} = \frac{p_1-q_1}{p_1-q_2} \xi_1
                     + \frac{q_1-q_2}{p_1-q_2} \xi_2 \, , \\
 \hat{u}_{212,222} &=& \frac{p_2-p_3}{p_2-q_3} \xi_2
                     + \frac{p_3-q_3}{p_2-q_3} \xi_3 \, , \qquad
 \hat{u}_{211,212} = \frac{p_2-q_2}{p_2-q_3} \xi_2
                     + \frac{q_2-q_3}{p_2-q_3} \xi_3 \, , \\
 \hat{u}_{122,222} &=& \frac{(p_1-p_2)(p_1-p_3)}{(p_1-q_2)(p_1-q_3)} \xi_1
                     + \frac{(p_1-p_3)(p_2-q_2)}{(p_1-q_2)(p_1-q_3)} \xi_2
                     + \frac{p_3-q_3}{p_1-q_3} \xi_3 \, , \\
 \hat{u}_{121,122} &=& \frac{(p_1-p_2)(p_1-q_1)}{(p_1-q_2)(p_1-q_3)} \xi_1
                     + \frac{(p_1-q_1)(p_2-q_2)}{(p_1-q_2)(p_1-q_3)} \xi_2
                     + \frac{q_1-q_3}{p_1-q_3} \xi_3 \, , \\
 \hat{u}_{111,112} &=& \frac{p_1-q_1}{p_1-q_3} \xi_1 + \frac{(p_2-q_2)(q_1-q_3)}{(p_1-q_3)(p_2-q_3)} \xi_2 
                     + \frac{(q_1-q_3)(q_2-q_3)}{(p_1-q_3)(p_2-q_3)} \xi_3 \, , \\
 \hat{u}_{112,212} &=& \frac{p_1-p_3}{p_1-q_3} \xi_1 + \frac{(p_2-q_2)(p_3-q_3)}{(p_1-q_3)(p_2-q_3)} \xi_2 
                     + \frac{(p_3-q_3)(q_2-q_3)}{(p_1-q_3)(p_2-q_3)} \xi_3 \, , \\
 \hat{u}_{112,122} &=& \frac{(p_1-p_3)(p_1-q_1)}{(p_1-q_2)(p_1-q_3)} \xi_1
                     + \frac{(p_2-p_3)(q_1-q_2)(p_1-q_3) 
                     + (p_1-q_1)(p_2-q_2)(p_3-q_3)}{(p_1-q_3)(p_1-q_2)(p_2-q_3)} \xi_2 \\
               &&    + \frac{(p_3-q_3)(q_1-q_3)}{(p_1-q_3)(p_2-q_3)} \xi_3 \, .
\eez
By ``integration'' of (\ref{hatu_IJ}), we find $\phi_{abc}$ up to a single constant. 
Setting $\phi_{222}=0$, we obtain
\bez
  \phi_{111} &=& \xi_1 (p_1-q_1)+\xi_2 (p_2-q_2)+\xi_3 (p_3-q_3) \, , \\
  \phi_{112} &=& \frac{(p_1-p_3) (p_1-q_1)}{p_1-q_3} \xi_1 
                + \Big( \frac{(q_1-q_3) (p_2-q_2) (p_3-q_3)}{(p_1-q_3) (q_3-p_2)} + p_2-q_2 \Big) \xi_2 \\
             &&   + \Big( \frac{(q_1-q_3) (q_2-q_3) (p_3-q_3)}{(p_1-q_3) (q_3-p_2)} + p_3-q_3 \Big) \xi_3  \\
  \phi_{121} &=& \frac{(p_1-p_2) (p_1-q_1)}{p_1-q_2} \xi_1 
                + \frac{(p_1-q_1) (p_2-q_2)}{p_1-q_2} \xi_2 
                + (p_3-q_3) \xi_3 \, , \\
  \phi_{211} &=& (p_2-q_2) \xi_2 + (p_3-q_3) \xi_3 \, ,  \\
  \phi_{122} &=& \frac{(p_1-p_2) (p_1-p_3) (p_1-q_1)}{(p_1-q_2) (p_1-q_3)} \xi_1 
                + \frac{(p_1-p_3) (p_1-q_1) (p_2-q_2)}{(p_1-q_2) (p_1-q_3)} \xi_2 
                + \frac{ (p_1-q_1) (p_3-q_3) }{p_1-q_3} \xi_3 \, , \\
  \phi_{221} &=& (p_3-q_3) \xi_3 \, , \\
  \phi_{212} &=& \frac{(p_2-p_3) (p_2-q_2)}{p_2-q_3} \xi_2 
                + \frac{(p_2-q_2) (p_3-q_3)}{p_2-q_3} \xi_3  \, .
\eez
 From (\ref{phi_I}) we can read off $M_{abc}$ and thus obtain via (\ref{phi=F/tau}) and (\ref{F_pure}) the solution
\bez
     \phi = \frac{1}{\tau} \sum_{a,b,c=1}^2 M_{abc} \, e^{\vartheta_{abc}} 
\eez
of the vector pKP equation.
This procedure can easily be applied to a larger number of solitons.

\section{A solution of the tetrahedron equation}
\label{sec:sol_tetrahedron}
In this section we address again the case of three pure solitons, see Section~\ref{sec:3solitons}. 
Because of the occurrence of phase shifts, in the tropical limit the ``crossing points'' of solitons 
are not really points. Fig.~\ref{fig:YBzoom} shows this for the second interaction (in the vertical $y$-direction) 
in Fig.~\ref{fig:YB}, or Fig.~\ref{fig:YB_for_reconstr}. 
\begin{figure} 
\begin{center}
\includegraphics[scale=.32]{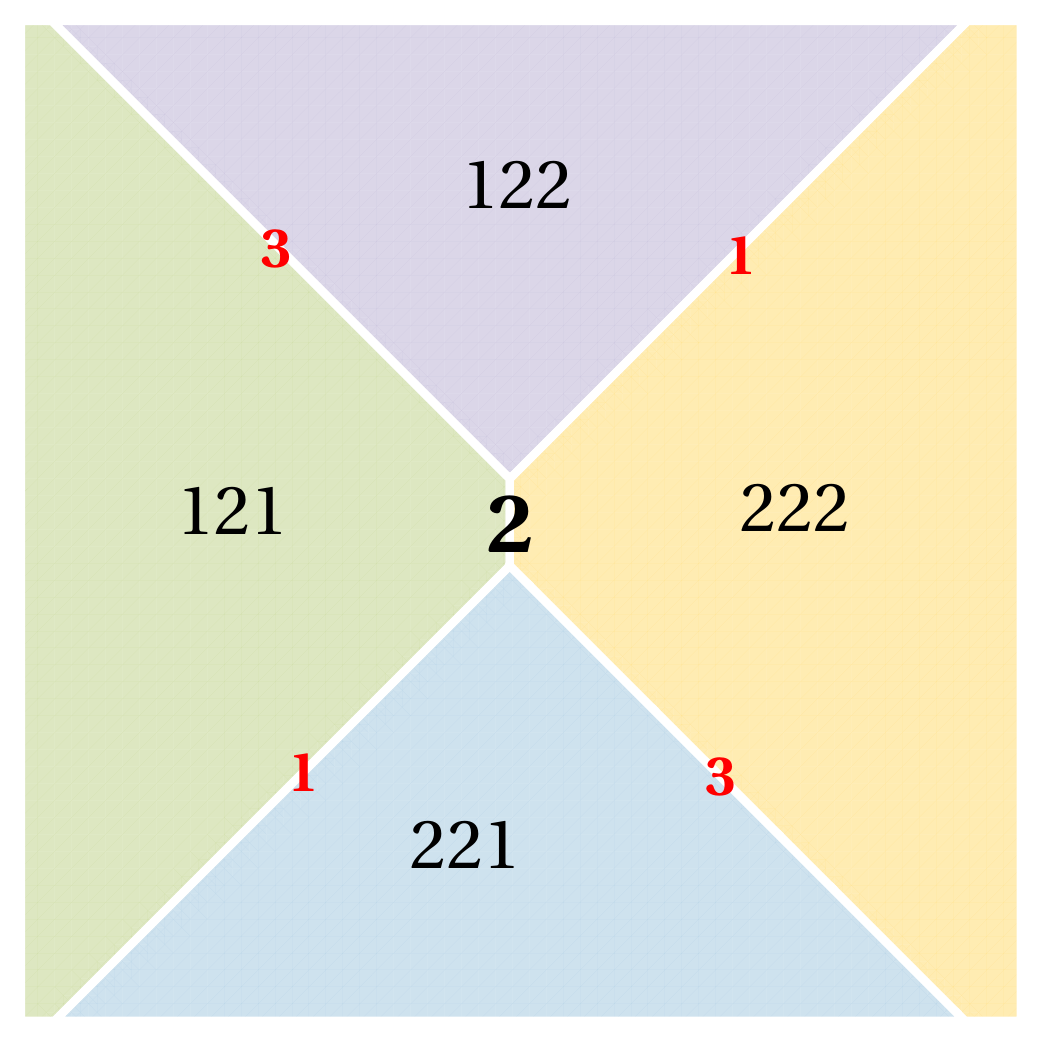} 
\hspace{.5cm}
\includegraphics[scale=.32]{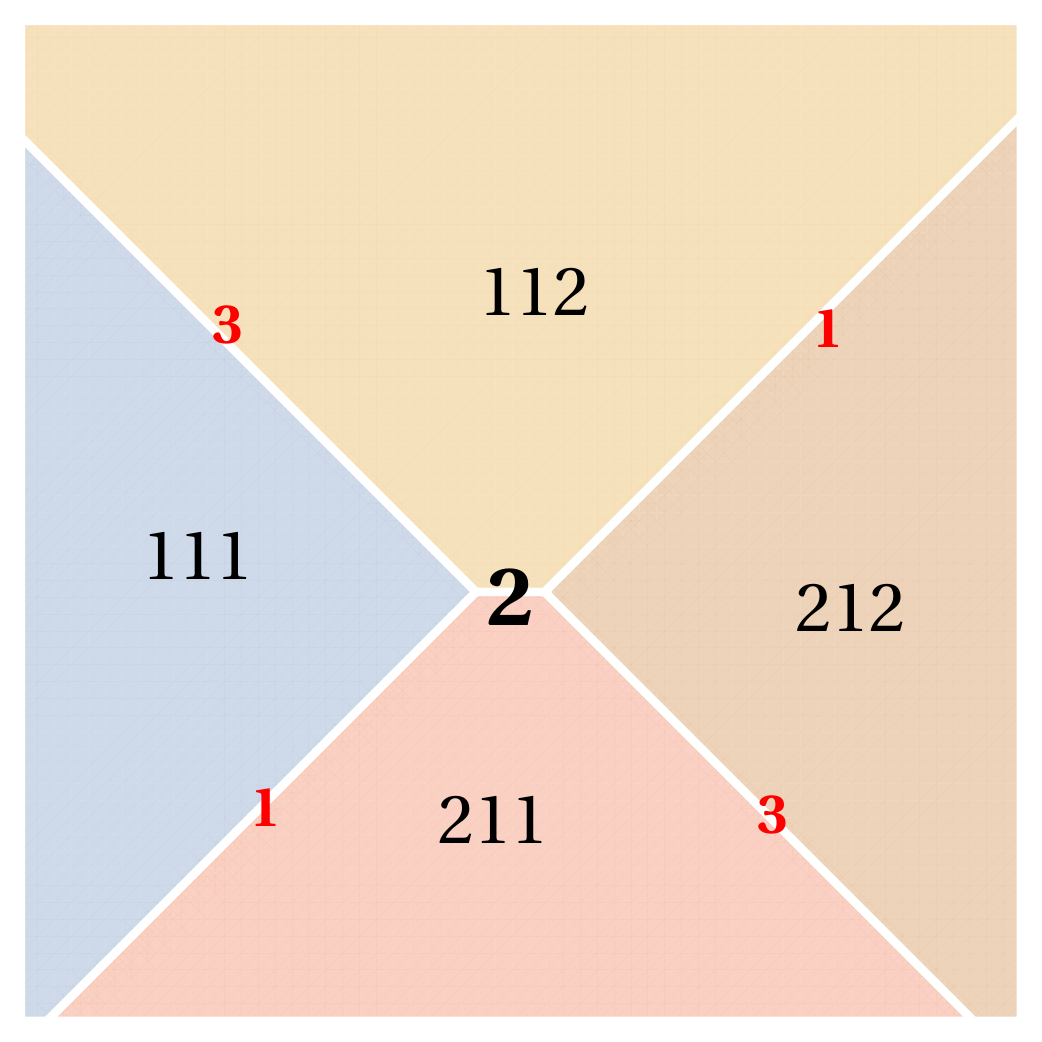} 
\end{center}
\caption{Passing to the tropical limit and zooming into the second interaction ``point'' in the tropical limit of Fig.~\ref{fig:YB}, 
shows the tropical origin of phase shifts. Again, the left figure refers to $t<0$, the right to $t>0$. 
\label{fig:YBzoom} }
\end{figure}
We may think of associating with the additional edge in the left plot ($t<0$) the polarization of the 
boundary between the phase regions numbered by $121 = (1,2,1)$ and $222$, and in the right plot ($t>0$) 
that of the boundary between the phases $112$ and $211$. But this does not lead us to something meaningful. 
Instead, we make an educated guess and associate a mean value of vectors with an additional edge,
\bez
  &&  V_1=\frac{1}{2} \left( V_{111,221}+V_{121,211} \right) \, , \quad
    V_2=\frac{1}{2} \left( V_{121,222}+V_{122,221} \right) \, , \quad
    V_3=\frac{1}{2} \left( V_{111,122}+V_{121,112} \right) \, , 
\eez
and
\bez
 &&  \hat{V}_1=\frac{1}{2} \left( V_{112,222}+V_{122,212} \right) \, , \quad
   \hat{V}_2=\frac{1}{2} \left( V_{111,212}+V_{112,211} \right) \, , \quad
   \hat{V}_3=\frac{1}{2} \left( V_{211,222}+V_{212,221} \right) \, . 
\eez
The inner boundary lines shown in the two plots in Fig.~\ref{fig:YBzoom} are now associated with $V_2$, respectively $\hat{V}_2$. 
In the first plot, the boundary between phase regions $122$ and $221$ is hidden, but there is also 
a polarization associated with it, namely $V_{122,221}$. $V_2$ is the mean value of the latter and $V_{121,222}$, 
which belongs to the visible inner boundary segment. 

Now we have
\bez
     S(i,j,k) = \left(\begin{array}{ccc} V_1 & V_2 & V_3 \end{array}\right)^{-1} 
       \left(\begin{array}{ccc} \hat{V}_1 & \hat{V}_2 & \hat{V}_3 \end{array}\right)\Big|_{p_1 \mapsto p_i, p_2 \mapsto p_j, p_3 \mapsto p_k,q_1 \mapsto q_i, q_2 \mapsto q_j, q_3 \mapsto q_k}  
     = \frac{1}{\delta_{ijk}} \gamma^{ijk} - 1_3 \, , 
\eez
where $\gamma^{ijk} = (\gamma^{ijk}_{rs})$ with
\bez
 && \gamma^{ijk}_{11} = \rho_k \left( \rho_i (p_k+q_k-2 q_i) + \rho_j (p_k+q_k-2 p_j) \right)  , \\
 && \gamma^{ijk}_{22} = \rho_j \left( \rho_i (p_j+q_j-2 q_i) - \rho_k (p_j+q_j-2 q_k) \right)  , \\
 && \gamma^{ijk}_{33} = -\left( \rho_i (\rho_j (p_i+q_i-2 p_j) + \rho_k (p_i+q_i-2 q_k) \right)  , \\
 && \gamma^{ijk}_{12} = -\rho_k \frac{\rho_i^2-\rho_j^2}{\rho_i^2-\rho_k^2} \left( \rho_k^2 - \rho_i (-2 q_i+p_k+q_k) \right)  , \quad
  \gamma^{ijk}_{13} = -\rho_k \frac{\rho_i^2-\rho_j^2}{\rho_j^2-\rho_k^2} \left( \rho_j (p_k+q_k-2 p_j) + \rho_k^2 \right)   , \\
 && \gamma^{ijk}_{21} = -\rho_j \frac{\rho_i^2-\rho_k^2}{\rho_i^2-\rho_j^2} \left( \rho_j^2 - \rho_i (p_j+q_j-2 q_i) \right)   , \quad
   \gamma^{ijk}_{23} = \rho_j \frac{\rho_i^2-\rho_k^2}{\rho_j^2-\rho_k^2} \left( \rho_j^2 - \rho_k (p_j+q_j-2 q_k) \right)    , \\
 && \gamma^{ijk}_{31} = \rho_i \frac{\rho_j^2-\rho_k^2}{\rho_i^2-\rho_j^2} \left( \rho_j (p_i+q_i-2 p_j) + \rho_i^2 \right)    , \quad
   \gamma^{ijk}_{32} = \rho_i \frac{\rho_j^2-\rho_k^2}{\rho_i^2-\rho_k^2} \left( \rho_i^2 - \rho_k (p_i+q_i-2 q_k) \right)    ,
\eez
and 
\bez
   \delta_{ijk} = \left(p_j-q_i\right) \rho_i \rho_j + \left(q_k-q_i\right) \rho_i \rho_k + \left(q_k-p_j\right) \rho_j \rho_k \, .   
\eez
Here we set
\bez
     \rho_i = p_i - q_i \, .
\eez
Let $S_{\alpha\beta\gamma}(i,j,k)$ be the $6 \times 6$ matrix which acts via $S(i,j,k)$ on the positions 
$\alpha,\beta,\gamma$ of a $6$-component column vector, and as the identity on the others. Let  
\bez
    \mathcal{R}_{123} = S_{123}(1,2,3) \, , \quad
    \mathcal{R}_{145} = S_{145}(1,2,4) \, , \quad
    \mathcal{R}_{246} = S_{246}(1,3,4) \, , \quad
    \mathcal{R}_{356} = S_{356}(2,3,4) \, .
\eez
We verified that this constitutes a (to our knowledge new) solution of the \emph{tetrahedron (Zamolodchikov) equation}
(see e.g. \cite{DMH15} and references cited there)
\be
     \mathcal{R}_{123} \, \mathcal{R}_{145} \, \mathcal{R}_{246} \, \mathcal{R}_{356} 
   = \mathcal{R}_{356} \, \mathcal{R}_{246} \, \mathcal{R}_{145} \, \mathcal{R}_{123} \, .  \label{tetrahedron_eq}
\ee
An explanation for the choices of numbers in the above definition of $\mathcal{R}_{\alpha\beta\gamma}$ can be found 
in Fig.~\ref{fig:tetrahedron}. Also see \cite{KKS98}.

\begin{figure} 
\begin{center}
\includegraphics[scale=.28]{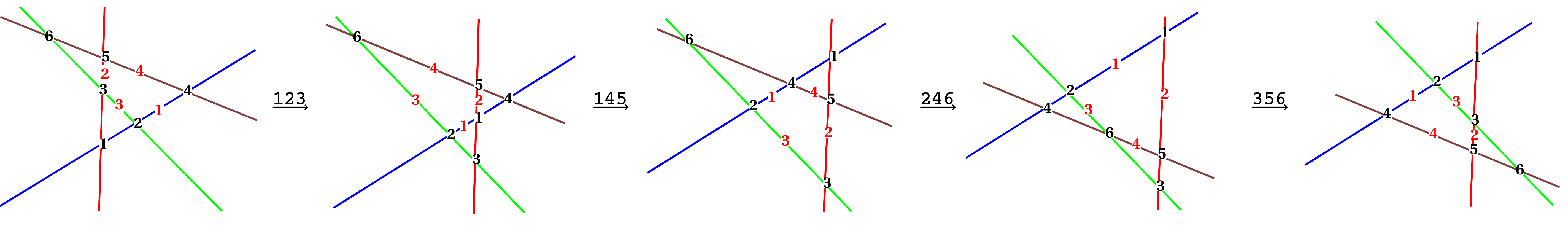} \\
\includegraphics[scale=.29]{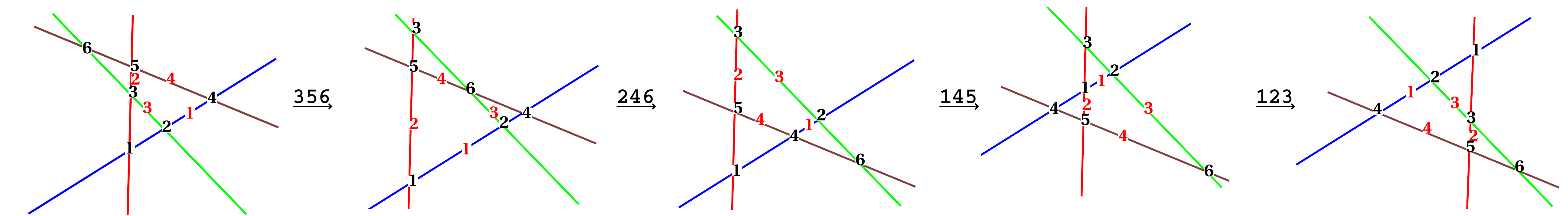} 
\end{center}
\caption{KP line solitons remain parallel while moving. The two chains of line configurations 
(here shifts are disregarded) show the two different ways in which a 4-soliton solution can evolve 
from the same initial to the same final configuration. 
(The two chains represent the higher Bruhat order $B(4,2)$, cf. \cite{DMH15}.) 
This implies the tetrahedron equation.  
Here (red) numbers attached to lines enumerate the four solitons. Also the crossings of pairs of them are 
enumerated (by numbers in black). 
Time evolution proceeds by inversion of triangles. In the first chain, the first step is the inversion of 
the triangle formed by the crossings numbered $1,2,3$. The second step is the inversion of the triangle 
formed by the crossings $1,4,5$. The latter involves the solitons with numbers $1,2,4$.    
\label{fig:tetrahedron} }
\end{figure}

\begin{remark}
The KdV$_K$ reduction $q_i = -p_i$ yields 
\bez
S_{\mathrm{KdV}}(i,j,k) = \left(
 \begin{array}{ccc}
 -\frac{(p_i+p_k) (p_j-p_k)}{(p_i-p_k) (p_j+p_k)} & \frac{2 (p_i-p_j) p_k}{(p_i-p_k) (p_j+p_k)} & \frac{2 (p_i-p_j) p_k}{(p_i-p_k) (p_j+p_k)} \\
 \frac{2 p_j (p_i+p_k)}{(p_i+p_j) (p_j+p_k)} & \frac{(p_i-p_j) (p_j-p_k)}{(p_i+p_j) (p_j+p_k)} & \frac{2 p_j (p_i+p_k)}{(p_i+p_j) (p_j+p_k)} \\
 \frac{2 p_i (p_j-p_k)}{(p_i+p_j) (p_i-p_k)} & \frac{2 p_i (p_j-p_k)}{(p_i+p_j) (p_i-p_k)} & -\frac{(p_i-p_j) (p_i+p_k)}{(p_i+p_j) (p_i-p_k)} \\
 \end{array} \right) \, ,
\eez
which determines a simpler solution of the tetrahedron equation. 
\hfill $\square$
\end{remark}

\section{A generalization of the vector KP $R$-matrix and a solution of the functional tetrahedron equation} 
\label{sec:genR&tetra}
The vector KdV $R$-matrix (\ref{R_KdV}) is obtained from the one-parameter $R$-matrix (see, e.g., \cite{Serg98,KKS98} for a 
similar $R$-matrix)
\bez
     R(x) =  \left( \begin{array}{cc} 
       x & 1+x \\
       1-x & -x
       \end{array} \right) \, ,
\eez
by setting $x = (p_1-p_2)/(p_1+p_2)$. 
The local Yang-Baxter equation
\bez
    R_{12}(x) \, R_{13}(y) \, R_{23}(z) = R_{23}(Z) \, R_{13}(Y) \, R_{12}(X) \, ,
\eez
where indices $\alpha \beta$ indicate on which components of a three-fold direct sum $R$ acts, 
determines the map $(x,y,z) \mapsto (X,Y,Z)$ given by 
\bez
    X = \frac{x y}{x+z - x y z} \, , \quad
    Y = x + z - x y z \, , \quad
    Z = \frac{y z}{x+z - x y z} \, .
\eez
A similar map appeared in \cite{Serg98,KKS98}. 
A general argument (cf., e.g., \cite{DMH15} and references cited there) implies that 
\bez
     \mathcal{R}(x,y,z) := (X,Y,Z) 
\eez
solves the (functional) tetrahedron equation (\ref{tetrahedron_eq}), where a ``product'' of $\mathcal{R}$'s now has 
to be interpreted as composition of maps. This tetrahedron map is involutive.  
Setting $x = (p_1-p_2)/(p_1+p_2)$, $y= (p_1-p_3)/(p_1+p_3)$ and $z= (p_2-p_3)/(p_2+p_3)$, it becomes the identity. 

Correspondingly, the vector KP $R$-matrix (\ref{Rmatrix}) is obtained from the more general two-parameter $R$-matrix
\bez
     R(x,y) =  \left( \begin{array}{cc} 
       x & y \\
       1-x & 1-y
       \end{array} \right) \, ,
\eez
by setting $x = (p_1-p_2)/(p_1-q_2)$ and $y = (p_1-q_1)/(p_1-q_2)$. 
The local Yang-Baxter equation
\bez
    R_{12}(x,y) \, R_{13}(z,u) \, R_{23}(v,w) = R_{23}(V,W) \, R_{13}(Z,U) \, R_{12}(X,Y) \, ,
\eez
determines the map $(x,y;z,u;v,w) \mapsto (X,Y;Z,U;V,W)$, where
\bez
 && X = z \, C \, , \quad
    Y = (z - \frac{A}{x}) \, C \, , \quad
    Z = \frac{x}{C} \, , \\
 && U = 1 - B  \, , \quad
    V =  \frac{v z \, (x-y)}{A} \, , \quad
    W = 1 - \frac{(1-u) (1-w)}{B}  \, ,
\eez
with 
\bez
 && A = u v x-u x-v y+x z   \, , \quad
    B = u w x-u x-w y+1 \, , \\
 && C = \frac{A B-A (1-u) (1-w) x-B v \, (x-y)}{A B-A (1-u) (1-w)-B v z \, (x-y)} \, .
\eez
Then 
\bez
    \mathcal{R}(x,y;z,u;v,w) := (X,Y;Z,U;V,W) 
\eez
solves the functional (i.e., set-theoretical) tetrahedron equation.

\section{Conclusions}
\label{sec:conclusions}
In this work we explored ``pure'' soliton solutions of matrix KP equations in a tropical limit. 
In case of the reduction to matrix KdV, this consists of a planar graph in (two-dimensional) space-time, 
with polarizations assigned to its edges. 
Given initial polarizations, the evolution of them along the graph is ruled by a Yang-Baxter map. 
For the vector KdV equation, this is a \emph{linear} map, hence an $R$ matrix. 
The classical scattering process of matrix KdV solitons resembles in the tropical limit the scattering 
of point particles in a 2-dimensional integrable quantum field theory, which is characterized by a 
scattering matrix that solves the (quantum) Yang-Baxter equation. 

We have shown that all this holds more generally for KP$_K$, where the tropical limit at a fixed 
time $t$ is given by a graph in the $xy$-plane, with polarizations attached to the soliton lines. 
Moreover, the vector KP case provides us with a realization of the ``classical straight-string model'' 
considered in \cite{KKS98}. It should be noticed, however, that KP line solitons in the tropical limit 
are not, in general, straight because of the appearance of (phase) shifts.  

As a side product of our explorations of the tropical limit of pure vector KP solitons, we derived 
apparently new solutions of the tetrahedron (Zamolodchikov) equation. Whether these solutions are relevant, e.g.,  
for the construction of solvable models of statistical mechanics in three dimensions, has still to be seen.

Another subclass of soliton solutions of the vector KP equation consists of those, for which the support 
at fixed time is a rooted and generically binary tree in the tropical limit. For the scalar KP equation, 
this has been extensively explored in \cite{DMH11KPT,DMH12KPBT}. Instead of the Yang-Baxter equation, 
the \emph{pentagon equation} (see \cite{DMH15} and references therein) now plays a role in governing corresponding 
vector solitons. This will be treated in a separate work. 

\paragraph{Acknowledgment.} A.D. thanks V. Papageorgiou for a very helpful discussion.

\small

\normalsize

\end{document}